\icmltitlerunning{Boomerang Sampler}
\newcommand{\ind}{\mathbf{1}}
\newcommand{\dd}{ \mathrm{d}}
\newcommand{\dif}{\dd}
\newcommand{\E}{\mathbb E}
\renewcommand{\P}{\mathbb P}
\newcommand{\R}{\mathbb R}
\newcommand{\SSigma}{\bm \Sigma}
\newcommand{\x}{\bm x}
\newcommand{\y}{\bm y}
\newcommand{\z}{\bm z}
\renewcommand{\v}{\bm v}
\newcommand{\w}{\bm w}
\newcommand{\refr}{\textrm{refr}}
\newcommand{\0}{\bm 0}
\DeclareMathOperator*{\argmin}{arg\,min}
\theoremstyle{plain}
\newtheorem{theorem}{Theorem}[section]
\newtheorem{lemma}[theorem]{Lemma}
\newtheorem{proposition}[theorem]{Proposition}
\theoremstyle{remark}
\newtheorem{remark}[theorem]{Remark}
\begin{document}
	
	\twocolumn[
	\icmltitle{The Boomerang Sampler}
	
	
	
	
    \begin{icmlauthorlist}
		\icmlauthor{Joris Bierkens}{delft}
		\icmlauthor{Sebastiano Grazzi}{delft}
		\icmlauthor{Kengo Kamatani}{osaka}
		\icmlauthor{Gareth Roberts}{warwick}
	\end{icmlauthorlist}
	\icmlaffiliation{delft}{Technische Universiteit Delft, Netherlands}
	\icmlaffiliation{osaka}{Graduate School of Engineering Science, Osaka University, Japan}
	\icmlaffiliation{warwick}{Department of Statistics, University of Warwick, United Kingdom}
	
	\icmlcorrespondingauthor{Joris Bierkens}{joris.bierkens@tudelft.nl}

	\icmlkeywords{MCMC, piecewise deterministic Monte Carlo, Bayesian inverse problems, diffusion bridge simulation}

	\vskip 0.3in
	]



\printAffiliationsAndNotice{}  

\begin{abstract}
This paper introduces the Boomerang Sampler as a novel class of continuous-time non-reversible Markov chain Monte Carlo algorithms. The methodology begins by representing the target density as a density, $e^{-U}$, with respect to a prescribed (usually) Gaussian measure and constructs a continuous trajectory consisting of a piecewise elliptical path. The method moves from one elliptical orbit to another according to a rate function which can be written in terms of $U$. We demonstrate that the method is easy to implement and demonstrate empirically that it can out-perform existing benchmark piecewise deterministic Markov processes such as the bouncy particle sampler and the Zig-Zag. In the Bayesian statistics context, these competitor algorithms are of substantial interest in the large data context due to the fact that they can adopt data subsampling techniques which are exact (ie induce no error in the stationary distribution). We demonstrate theoretically and empirically that we can also construct a control-variate subsampling boomerang sampler which is also exact, and which possesses remarkable scaling properties in the large data limit.
We furthermore illustrate a factorised version on the simulation of diffusion bridges.
\end{abstract}

\section{Introduction}

Markov chain Monte Carlo remains the gold standard for asymptotically exact (ie bias-free) Bayesian inference for complex problems in Statistics and Machine Learning; see for example \cite{brooks2011handbook}. Yet a major impediment to its routine implementation for large data sets is the need to evaluate the target density (and possibly other related functionals) at each algorithm iteration.

Partly motivated by this, in recent years there has been a surge in the development of innovative piecewise deterministic Monte Carlo methods (PDMC, most notably the Bouncy Particle Sampler (BPS) \cite{BouchardCoteVollmerDoucet2017} and the Zig-Zag Sampler (ZZ) \cite{BierkensFearnheadRoberts2016}), as a competitor for classical MCMC algorithms such as Metropolis-Hastings and Gibbs sampling. We refer to \cite{Fearnhead2016a} for an accessible introduction to the PDMC setting. The primary benefits of these methods are the possibility of \emph{exact subsampling} and \emph{non-reversibility}. Exact subsampling refers to the possibility of using only a subset of the full data set (or even just a single observation) at each iteration of the algorithm, without introducing bias in the output of the algorithm \cite{Fearnhead2016a}. Non-reversibility is a property of MCMC algorithms related to a notion of direction of the algorithm, reducing the number of backtracking steps, thus reducing the diffusivity of the algorithm and reducing the asymptotic variance; as analyzed e.g. in \cite{MR1789978,andrieu2019peskun}.

The current key proponents BPS and ZZ of the PDMC paradigm share the following description of their dynamics. The process moves continuously in time according to a constant velocity over random time intervals, which are separated by `switching events'. These switching events occur at stochastic times at which the velocity, or a component of it, is either reflected, or randomly refreshed. The direction of a reflection, and the random time at which it occurs, is influenced by the target probability distribution.

In this paper we explore the effect of modifying the property of constant velocity. By doing so we introduce the Boomerang Sampler which has dynamics of the simple form $\frac{\dd \x}{\dd t} = \v$, $\frac{\dd \v}{\dd t} = -\x$. Similar ideas were introduced in
\cite{vanetti2017piecewisedeterministic} and termed Hamiltonian-BPS,
a method which can be seen as a special case of our approach.
We generalise the Hamiltonian-BPS algorithm in three important ways.
\begin{enumerate}
    \item 
We relax a condition which restricts the covariance function of the auxiliary velocity process to be isotropic. This generalisation is crucial to ensure good convergence properties of the algorithm.
\item
Furthermore we extend the Boomerang Sampler to allow for exact subsampling (as introduced above), thus permitting its application efficiently for large data sets.
\item
We also introduce a factorised extension of the sampler which has important computational advantages  in the common situation where the statistical model exhibits suitable conditional dependence structure.
\end{enumerate}
Our method also has echoes of the elliptical slice sampler \cite{pmlr-v9-murray10a} which has been a successful discrete-time MCMC method especially within machine learning applications.
Both methods are strongly motivated by Hamiltonian dynamics although there are also major differences in the two approaches.
Finally we mention some other PDMP methods with non-linear dynamics such as Randomized HMC \cite{bou-rabee2017,Deligiannidis2018}, and others \cite{markovic2018bouncy,Terenin2018}.

We shall study the Boomerang Sampler and two subsampling alternatives theoretically by analysing the interaction of Bayesian posterior contraction, data size ($n$) and subsampling schemes in the regular (smooth density) case. We shall show that no matter the rate of posterior contraction, a suitably constructed subsampled Boomerang sampler achieves an ${\cal O} (n)$ advantage over non-subsampled algorithms. 

At the same time, we show that for the (non-subsampled) Boomerang Sampler, the number of switching events, and thus the computational cost, can be reduced by factor $\mathcal{ O}(1/d)$ (where $d$ is the number of dimensions) relative to other piecewise deterministic methods, thanks to the deterministic Hamiltonian dynamics of the Boomerang Sampler.

We illustrate these analyses with empirical investigations in which we compare the properties of Boomerang samplers against other PDMC benchmarks demonstrating the superiority of subsampled Boomerang for sufficiently large data size for any fixed dimension in the setting of logistic regression. We shall also give an empirical study to compare the Boomerang Sampler with its competitors as dimension increases.
Finally, as a potentially very useful application we describe the simulation of diffusion bridges using the Factorised Boomerang Sampler, demonstrating substantial computational advantages compared to its natural alternatives.

\subsection*{Notation}
For $\bm a \in \R^d$ and $\bm \Sigma$ a positive definite matrix in $\R^{d \times d}$ we write $\mathcal N(\bm a,\bm \Sigma)$ for the Gaussian distribution in $\R^d$ with mean $\bm a$ and covariance matrix $\bm \Sigma$. Let $\langle \cdot, \cdot \rangle$ denote the Euclidean inner product in $\R^{d}$. We write $(a)_+ := \max(a,0)$ for the positive part of $a \in \R$, and we write $\langle \cdot, \cdot \rangle_+ := (\langle \cdot, \cdot \rangle)_+$ for the positive part of the inner product.

\section{The Boomerang Sampler}

The Boomerang Sampler is a continuous time, piecewise deterministic Markov process (PDMP) with state space $S = \R^d \times \R^d$.  The two copies of $\R^d$ will be referred to as the \emph{position space} and the \emph{velocity space}, respectively. Our primary interest is in sampling the position coordinate, for which the auxiliary velocity coordinate is a useful tool for us.

Let $\mu_0$ denote a Gaussian measure on $S$ specified by $\mu_0 = \mathcal N(\x_{\star}, \SSigma) \otimes \mathcal N(\0,\SSigma)$, where $\SSigma$ is a positive definite matrix in $\R^{d \times d}$. Often we take $\x_{\star} = \0$ to shorten expressions, which can be done without loss of generality by a shift in the position coordinate.
The measure $\mu_0$ will be referred to as the \emph{reference measure}. 
The Boomerang Sampler is designed in such a way that it has stationary probability distribution $\mu$ with density $\exp(-U(\x))$ relative to $\mu_0$. Equivalently, it has density 
\[ \exp \left( - U(\x) - \tfrac 1 2 (\x - \x_{\star})^{\top} \SSigma^{-1} (\x - \x_{\star}) - \tfrac 1 2 \v^{\top} \SSigma^{-1} \v \right)\]
relative to the Lebesgue measure $\dd \x \otimes \dd \v$ on $\R^d \times \R^d$. We assume that this density has a finite integral. The marginal distribution of $\mu$ with respect to $\x$ is denoted by $\Pi$.

The Boomerang process moves along deterministic trajectories $(\x_t, \v_t) \in \R^d \times \R^d$ which change direction at random times. The deterministic trajectories satisfy the following simple ordinary differential equation:
\begin{equation}
\label{eq:boomerang-ode} \frac{\dd\x_t}{\dd t} = \v_t, \quad \frac{\dd\v_t}{\dd t} = -(\x_t-\x_{\star}),
\end{equation}
with explicit solution 
$ \x_t =\x_{\star} + (\x_0 - \x_{\star}) \cos(t) + \v_0 \sin(t)$, $\v_t = -(\x_0 - \x_{\star}) \sin(t) + \v_0 \cos(t)$. Note that $(\x, \v) \mapsto \langle \x - \x_{\star}, {\bm Q}  (\x - \x_{\star}) \rangle + \langle \v, {\bm Q} \v \rangle$ is invariant with respect to the flow of~\eqref{eq:boomerang-ode} for any symmetric matrix $\bm Q$. In particular the flow of~\eqref{eq:boomerang-ode} preserves the Gaussian measure $\mu_0$ on $S$.

Given an initial position $(\x_0,\v_0) \in S$, the process moves according to the motion specified by~\eqref{eq:boomerang-ode}, resulting in a trajectory $(\x_t,\v_t)_{t \geq 0}$, until the first event occurs. The distribution of the first reflection event time $T$ is specified by
\[ \P(T \geq t) = \exp \left( - \int_0^t \lambda(\x_s,\bm v_s) \, \dd s \right),\]
where $\lambda : S \rightarrow [0,\infty)$ is the \emph{event rate} and is specified as
\begin{equation} \label{eq:reflection_intensity} \lambda(\bm x, \bm v) = \langle \v, \nabla U(\x) \rangle_+. \end{equation}
For $\x \in \R^d$ we define the \emph{contour reflection} $\bm R(\x)$ to be the linear operator from $\R^d$ to $\R^d$ given, for $(\x,\v) \in S$, by
\begin{equation} \label{eq:reflection}
\bm R(\x) \v = \v - \frac{2 \langle \nabla U(\x), \v \rangle}{|\SSigma^{1/2}\nabla U(\x)|^2} \SSigma \nabla U(\x).
\end{equation}
Importantly the reflection satisfies
\begin{equation} \label{eq:reflection_change_sign} \langle \bm R(\x) \v, \nabla U(\x) \rangle = - \langle \v, \nabla U(\x)\rangle \end{equation} and 
\begin{equation} \label{eq:reflection_norm_preservation} |\SSigma^{-1/2} \bm R(\x) \v | = |\SSigma^{-1/2} \v|, \end{equation}
which are key in establishing that the resulting Boomerang Sampler has the correct stationary distribution. 

At the random time $T$ at which a switch occurs, we put $\v_T := \bm R(\x_{T-})\v_{T-}$, where we use the notation $\y_{t-} := \lim_{s \uparrow t} \y_s$.
The process then starts afresh according to the dynamics~\eqref{eq:boomerang-ode} from the new position $(\x_T, \v_T)$.
Additionally, at random times generated by a homogeneous Poisson process with rate $\lambda_{\refr} > 0$ the velocity is refreshed, i.e. at such a random time $T$ we independently draw $\v_T \sim \mathcal N(\0, \SSigma)$. This additional input of randomness guarantees that the Boomerang Sampler can visit the full state space and is therefore ergodic, as is the case for e.g. BPS \cite{BouchardCoteVollmerDoucet2017}.

In Appendix A we define the generator of the Boomerang Sampler, which can in particular be used to prove that $\mu$ is a stationary distribution for the Boomerang process, and which can be used in subsequent studies to understand its probabilistic properties.

\begin{remark}[On the choice of the reference measure]
In principle we can express \emph{any} probability distribution $\Pi(\dd \x) \propto \exp(-E(\x))\, \dd \x$ as a density relative to $\mu_0$ by defining 
\begin{equation} \label{eq:U-in-terms-of-E} U(\x) = E(\x) - \tfrac 1 2 (\x - \x_{\star})^{\top} \SSigma^{-1} (\x - \x_{\star}).
\end{equation} As mentioned before we can take $\mu_0$ to be identical to a Gaussian prior measure in the Bayesian setting. Alternatively, and this is an approach which we will adopt in this paper, we may choose $\mu_0$ to be a Gaussian approximation of the measure $\Pi$  which may be obtained at relatively small computational cost in a preconditioning step.
\end{remark}

\subsection{Factorised Boomerang Sampler}

As a variation to the Boomerang Sampler introduced above we introduce the Factorised Boomerang Sampler (FBS), which is designed to perform well in situations where the conditional dependencies in the target distribution are sparse.
For simplicity we restrict to the case with a diagonal reference covariance $\SSigma = \mathrm{diag}(\sigma_1^2, \dots, \sigma_d^2)$.

The deterministic dynamics of the FBS are identical to those of the standard Boomerang Sampler, and given by~\eqref{eq:boomerang-ode}. The difference is that every component of the velocity has its own switching intensity. This is fully analogous with the difference between BPS and ZZ, where the latter can be seen as a factorised Bouncy Particle Sampler. In the current case, this means that as switching intensity for the $i$-th component of the velocity we take
\[ \lambda_i(\x,\v) = (v_i \partial_i U(\x))_+,\]
and once an event occurs, the velocity changes according to the operator $\bm F_i(\v)$ given by
\[ \bm F_i(\v) = \begin{pmatrix} v_1, \dots, v_{i-1}, -v_i, v_{i+1}, \dots, v_d \end{pmatrix}^{\top}.\] 
Also, the velocity of each component is refreshed according to $v_i\sim\mathcal{N}(0,\sigma_i^2)$ at rate $\lambda_{\mathrm{refr},i}>0$.

Note that the computation of the reflections has a computational cost of $\mathcal O(1)$, compared to the reflections in \eqref{eq:reflection} being at least of $\mathcal O(d)$, depending upon the sparsity of $\SSigma$. The sparse conditional dependence structure implies that the individual switching intensities $\lambda_i(\x)$ are in fact functions of a subset of the components of $\x$, contributing to a fast computation.
This feature can be exploited by an efficient `local' implementation of the FBS algorithm which reduces the number of Poisson times simulated by the algorithm (similar in spirit to the local Bouncy Particle Sampler \cite{BouchardCoteVollmerDoucet2017} and the local Zig-Zag sampler in \cite{bierkens2020piecewise}).
In Section~\ref{sec:scaling-with-dimension} we will briefly comment on the dimensional scaling of FBS. As an illustration of a realistic use, FBS will be applied to the simulation of diffusion bridges in Section~\ref{sec:diffusion-bridge}.

\subsection{Subsampling with control variates}
\label{sec:subsampling}

Let $E(\x)$ be the energy function, i.e., negative log density of $\Pi$ with respect to the Lebesgue measure.
Consider the setting where $E(\x) = \frac 1 n
\sum_{i=1}^n E^i(\x)$, as is often the case in e.g. Bayesian statistics or computational physics. (Let us stress that $n$ represents a quantity such as the number of interactions or the size of the data, and \emph{not} the dimensionality of $\x$, which is instead denoted by $d$.) Using this structure, we introduce a subsampling method using the Gaussian reference measure as a tool for the efficient construction of the Monte Carlo method.

Relative to a Gaussian reference measure with covariance $\SSigma$ centred at $\x_{\star}$, the negative log density is given by~\eqref{eq:U-in-terms-of-E}. 
Let us assume 
\begin{equation}
\label{eq:setsigma}
\SSigma = [\nabla^2 E(\x_{\star})]^{-1} \end{equation}
for a reference point $\x_{\star}$. In words, the curvature of the reference measure will agree around $\x_{\star}$ with the curvature of the target distribution.
 We can think of $\x_{\star}$ as the mean or mode of an appropriate Gaussian approximation used for the Boomerang Sampler. Note however that we shall not require that $\nabla E(\x_{\star}) = \0$ for the sampler and its subsampling alternatives to work well, although some restrictions will be imposed in Section \ref{sec:scaling-with-n}.
In this setting it is possible to employ a subsampling method which is exact, in the sense that it targets the correct stationary distribution.
This is an extension of methodology used for subsampling in other piecewise deterministic methods, see e.g. \cite{Fearnhead2016a} for an overview.

Assume for notational convenience that $\x_{\star} = \0$.
As an unbiased estimator for the log density gradient of $U$ we could simply take
\begin{equation}
\label{eq:unbiased1} 
  \widetilde{\nabla U(\x)} = 
\nabla E^I(\x) - \nabla^2 E(\0)\x,
\end{equation}
where $I$ is a random variable with uniform distribution over $\{1, \dots, n\}$. 
We shall see in Proposition~\ref{prop:lambda size} that this estimator will lead to weights which increase with $n$ and therefore we shall consider a control variate alternative.

Therefore also consider the control variate gradient estimator 
$\widehat{\nabla U(\x)}
= {\bm G}^I(\x)$, 
where, for $i = 1, \dots, n$,
\begin{equation} \label{eq:unbiased_estimator} 
\bm G^i(\x) = \nabla E^i(\x)- \nabla^2 E^i(\0) \x - \nabla E^i(\0) + \nabla E(\0).\end{equation}

Taking the expectation with respect to $I$,
\begin{align*}& \mathbb E_I \widehat {\nabla U(\x)} 
\\
&  = \frac 1 n \sum_{i=1}^n \left\{ \nabla E^i(\x)- \nabla^2 E^i(\0) \x - \nabla E^i(\0) + \nabla E(\0) \right\} \\
& = \nabla E(\x) - \nabla^2 E(\0) \x = \nabla U(\x),
\end{align*}
so that $\widehat {\nabla U(\x)}$ is indeed an unbiased estimator for $\nabla U(\x)$.
In Section \ref{sec:scaling} we shall show that $\widehat{\nabla U(\x)}$ has significantly superior scaling properties for large $n$ than $\widetilde{\nabla U(\x)}$.

\begin{remark}
In various situations we can find a reference point $\x_{\star}$ such that $\nabla E(\x_{\star}) = \0$, in which case the final term in~\eqref{eq:unbiased_estimator} vanishes. We include the term here so that it can accommodate the general situation in which $\nabla E(\x_{\star}) \neq \0$.
\end{remark}

Upon reflection, conditional on the random draw $I$, we reflect according to
\[ \bm R^I(\x) \v = \v - \frac{2 \langle \bm G^I(\x), \v \rangle}{|\SSigma^{1/2} \bm G^I(\x)|^2} \SSigma \bm G^I(\x).\]

The Boomerang Sampler that switches at the random rate $\widehat{\lambda(\x,\v)} = \langle \v, \widehat{\nabla U(\x)} \rangle_+$, and reflects according to $\bm R^I$ will preserve the desired target distribution in analogy to the argument found in \cite{BierkensFearnheadRoberts2016}.

\subsection{Simulation}
\label{sec:simulation}

The implementation of the Boomerang Sampler depends crucially on the ability to simulate from a nonhomogeneous Poisson process with a prescribed rate. In this section we will make a few general comments on how to achieve these tasks for the Boomerang Sampler and for the Subsampled Boomerang Sampler. 

Suppose we wish to generate the first event according to a switching intensity $\lambda(\x_t, \v_t)$ where $(\x_t, \v_t)$ satisfy~\eqref{eq:boomerang-ode}. This is challenging because it is non-trivial to generate points according to time inhomogeneous Poisson process, but also the function $\lambda(\x_t,\v_t)$ may be expensive to evaluate. It is customary in simulation of PDMPs to employ the technique of \emph{Poisson thinning} to generate an event according to a deterministic rate function $\overline \lambda(t) \geq 0$, referred to as \emph{computational bound}, such that $\lambda(\x_t, \v_t) \leq \overline \lambda(t)$ for all $t \geq 0$. The function $\overline \lambda(t)$ should be suitable in the sense that we can explicitly simulate $T$ according to the law
\[ \P(T \geq t) = \exp \left( - \int_0^t \overline \lambda(s) \, d s \right).\]
After generating $T$ from this distribution, we accept $T$ as a true switching event with probability $\lambda(\x_T, \v_T)/\overline \lambda(T)$. As a consequence of this procedure, the first time $T$ that gets accepted in this way is a Poisson event with associated intensity $\lambda(\x_t, \v_t)$.

In this paper we will only consider bounds $\overline \lambda(t)$ of the form $\overline \lambda(t; \x_0,\v_0) = a(\x_0,\v_0) + t b(\x_0, \v_0)$. We will call the bound \emph{constant} if $b(\x,\v) = 0$ for all $(\x,\v)$, and \emph{affine} otherwise. As a simple example, consider the situation in which $|\nabla U(\x)| \leq m$ for all $\x$. In this case we have
\[ \lambda(\x,\v) = \langle \v, \nabla U(\x) \rangle_+ \leq m |\v| \leq m \sqrt{ |\x|^2 + |\v|^2}.\]
Since the final expression is invariant under the dynamics~\eqref{eq:boomerang-ode}, we find that
\[ \lambda(\x_t,\v_t) \leq m \sqrt{ |\x_0|^2 + |\v_0|^2}, \quad t \geq 0,\]
which gives us a simple constant bound.

In the case of subsampling the switching intensity $\widehat {\lambda(\x,\v)}$ is random. Still, the bound $\overline \lambda(t; \x_0, \v_0)$ needs to be an upper bound for all random realizations of $\widehat {\lambda(\x,\v)}$. In the case we use the unbiased gradient estimator $\widehat{ \nabla U(\x)} = \bm G^I$ of~\eqref{eq:unbiased_estimator}, we can bound e.g.
\[ \widehat {\lambda(\x,\v)} \leq \sup_{i,\x} |\bm G^i(\x)| |\v| \leq \sup_{i,\x} |\bm G^i(\x)| \sqrt{|\x|^2 +|\v|^2},\]
assuming all gradient estimators $\bm G^i$ are globally bounded. We will introduce different bounds in detail in Appendix B.

\section{Scaling for large data sets and large dimension}
\label{sec:scaling}

\subsection{Robustness to large $n$}
\label{sec:scaling-with-n}

In this section, we shall investigate the variability of the rates induced by the Boomerang Sampler and its subsampling options. The size of these rates is related to the size of the upper bounding rate Poisson process used to simulate them. Moreover, the rate of the upper bounding Poisson rate is proportional to the number of density  evaluations, which in turn is a sensible surrogate for the computing cost of running the algorithm.

As in Section \ref{sec:subsampling}, we describe $E$ as a sum of $n$
constituent negative log-likelihood terms: $E(\x)=-\sum_{i=1}^n \ell_i (\x )$. 
(In the notation above we are just setting $\ell_i (\x )=-nE_i(\x)$.)
Under suitable regularity conditions, the target probability measure $\Pi$ satisfies posterior contraction around $\x=\0$ at the rate $\eta $, that is
for all $\epsilon$ there exists $\delta >0$ such that $\Pi (B_{n^{-\eta } \delta }(\0)) >1- \epsilon$ where $B_r(\0)$ denotes the ball of radius $r$ centred at $\0$. 
As a result of this, we typically have velocities of order
$n^{-\eta}$
ensuring that the dynamics in (\ref{eq:boomerang-ode})
circles the state space in $\mathcal O(1)$ time.

The various algorithms will have computational times roughly proportional to the number of likelihood evaluations, which in turn depends on the event rate (and its upper bound). Therefore we shall introduce explicitly the subsampling bounce rates corresponding to the use of the unbiased estimators in (\ref{eq:unbiased1}) and (\ref{eq:unbiased_estimator}).
\[
\widetilde {\lambda(\bm x, \bm v)} = \langle \v, \widetilde{\nabla U(\x)} \rangle_+\ ;
\quad \quad 
\widehat {\lambda(\bm x, \bm v)} = \langle \v, \widehat{\nabla U(\x)} \rangle_+ \ .
\]
To simplify the arguments below, we also assume that $\ell _i$ has all its third derivatives uniformly bounded, implying that all third derivative terms of $E$ are bounded by a constant multiple of $n$. This allows us to write down the expansion 
\begin{align}
 \nabla U (\x) &= \nabla E (\0) + \nabla^2 E(\0) \x -\SSigma ^{-1}\x + \mathcal O(n|\x |^2) \nonumber \\   
 &=\nabla E (\x) \nonumber \\
 &= \nabla E (\0)   + \mathcal O(n|\x |^2)\ .
\end{align}
Similarly we can write
\begin{align}
 \widehat{\nabla U (\x)} &= n \nabla \ell _I (\x ) -n\nabla^2 \ell_I ({\bf 0}) \x - n \ell_I({\bf 0}) \nonumber \\
 &+ \nabla E (\0) -\SSigma^{-1}\x
 \nonumber \\ 
 &=
\nabla E (\0)   + \mathcal O(n|\x |^2)\ .
 \end{align}
 using the same Taylor series expansion.

We can now use this estimate directly to obtain bounds on the event rates. We summarise this discussion in the following result.

\begin{proposition}
\label{prop:lambda size}
Suppose that $\x, \ \v \in B_{n^{-\eta } \delta }(\0)$ for some $\delta $, and under the assumptions described above, we have that
\begin{align}
\lambda (\x,\v ) &\le 
{\cal O}\left(
n^{-\eta}
(|\nabla E(\0)|+n^{1-2\eta })
\right)
\\    
\widetilde{ \lambda(\x,\v)} &\le 
{\cal O}\left(
|\nabla E (\0)|) + {\cal O}(n)
\right)
\\  
\widehat{\lambda(\x,\v)} &\le 
{\cal O}\left(
n^{-\eta}
(|\nabla E(\0)|+n^{1-2\eta })
\right)
\end{align}
\end{proposition}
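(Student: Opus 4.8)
The plan is to reduce all three estimates to a single elementary inequality and then plug in the Taylor expansions already derived above. Each of the three rates is the positive part of an inner product between $\v$ and a (possibly random) gradient surrogate; writing this surrogate as $\w$, I would first note that $\langle \v, \w\rangle_+ \le |\langle \v, \w\rangle| \le |\v|\,|\w|$ by Cauchy--Schwarz, and that the hypothesis $\v \in B_{n^{-\eta}\delta}(\0)$ gives $|\v| \le n^{-\eta}\delta = \mathcal O(n^{-\eta})$. The whole problem therefore reduces to bounding the magnitude of $\nabla U(\x)$, $\widehat{\nabla U(\x)}$ and $\widetilde{\nabla U(\x)}$ on the ball $B_{n^{-\eta}\delta}(\0)$.

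For the bounds on $\lambda$ and $\widehat\lambda$ I would invoke directly the two expansions established above, $\nabla U(\x) = \nabla E(\0) + \mathcal O(n|\x|^2)$ and $\widehat{\nabla U(\x)} = \nabla E(\0) + \mathcal O(n|\x|^2)$, the latter holding uniformly in the subsampling index $I$. On the ball $|\x| \le n^{-\eta}\delta$, so $n|\x|^2 = \mathcal O(n^{1-2\eta})$ and hence both $|\nabla U(\x)|$ and $|\widehat{\nabla U(\x)}|$ are at most $|\nabla E(\0)| + \mathcal O(n^{1-2\eta})$. Multiplying by $|\v| = \mathcal O(n^{-\eta})$ produces exactly the claimed bounds $\mathcal O(n^{-\eta}(|\nabla E(\0)| + n^{1-2\eta}))$ for $\lambda$ and for $\widehat\lambda$.

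The estimate for $\widetilde\lambda$ is where the argument genuinely differs, and this is the step I expect to require the most care. In contrast to the control-variate estimator, the naive estimator $\widetilde{\nabla U(\x)} = \nabla E^I(\x) - \nabla^2 E(\0)\x$ retains a single un-cancelled likelihood gradient $\nabla E^I(\x)$. Since this is one rescaled summand rather than an average enjoying cancellation across the $n$ terms, it is only $\mathcal O(n)$ in size and does \emph{not} inherit the $n|\x|^2$ smallness of the previous two surrogates. I would bound $|\nabla E^I(\x)| = \mathcal O(n)$ uniformly over $I$ and over $\x$ in the ball, and $|\nabla^2 E(\0)\x| = \mathcal O(n)\,\mathcal O(n^{-\eta}) = \mathcal O(n^{1-\eta})$; the $\mathcal O(n)$ term dominates, and combined with the $\nabla E(\0)$ bookkeeping this yields (conservatively, after Cauchy--Schwarz) the stated $\mathcal O(|\nabla E(\0)|) + \mathcal O(n)$.

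The only substantive point, and the main obstacle, is the contrast in the remainder bookkeeping together with uniformity over the random index $I$. For $\widehat{\nabla U}$ the per-term gradient and Hessian contributions cancel by construction, leaving a remainder controlled by the third derivatives of the $\ell_i$; the assumed uniform boundedness of these third derivatives is exactly what makes the remainder $\mathcal O(n|\x|^2)$ with a constant independent of $I$. For $\widetilde{\nabla U}$ no such cancellation occurs and an $\mathcal O(n)$ term survives, which is precisely the source of the qualitatively worse bound. Making this cancellation-versus-no-cancellation contrast precise and uniform in $I$ is the crux; everything else is Cauchy--Schwarz combined with the contraction scaling $|\x|, |\v| = \mathcal O(n^{-\eta})$.
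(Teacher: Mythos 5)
Your proposal is correct and follows essentially the same route as the paper: the paper's argument is precisely the Taylor expansions $\nabla U(\x) = \nabla E(\0) + \mathcal O(n|\x|^2)$ and $\widehat{\nabla U(\x)} = \nabla E(\0) + \mathcal O(n|\x|^2)$ (the latter via the control-variate cancellation, uniform in $I$ thanks to the bounded third derivatives), combined with Cauchy--Schwarz and $|\x|, |\v| \le n^{-\eta}\delta$, while the naive estimator $\widetilde{\nabla U}$ retains an uncancelled $\mathcal O(n)$ term. Your identification of the cancellation-versus-no-cancellation contrast as the crux matches the paper's intent exactly.
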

Thus the use of ${\widehat {\nabla U(\x )}} $ does not result in an increased event rate (in order of magnitude). There is therefore an ${\cal O}(n)$ computational advantage obtained from using subsampling due to each target density valuation being ${\cal O}(n)$ quicker.

Proposition \ref{prop:lambda size} shows that as long as the reference point $\x^*$ (chosen to be ${\bf 0}$ here for convenience) is chosen to be sufficiently close to the mode so that $|\nabla E (\0)|$ is at most $\mathcal O(n^{1-2\eta})$, then we have that
$$
{\lambda(\x,\v)}= \widehat{\lambda(\x,\v)} =
{\cal O}\left(
n^{1-3\eta }\right)\ .
$$
Note that this rate can go to $0$ when $\eta >1/3$.  In particular in the regular case where Bernstein von-Mises theorem holds, we have $\eta = 1/2$. In this case the rate of jumps for the Boomerang can recede to $0$ at rate $n^{-1/2}$ so long as $|\nabla E (\0)|$ is at most $\mathcal O(1)$).

\subsection{Scaling with dimension}
\label{sec:scaling-with-dimension}

In this section, we will discuss how the Boomerang Sampler has an attractive scaling property for high dimension. This property is qualitatively similar to the preconditioned Crank-Nicolson algorithm \cite{MR1723510, MR2444507} and the elliptical slice sampler \cite{pmlr-v9-murray10a} which take advantage of the reference Gaussian distribution.

The dimensional complexity of BPS and ZZ was studied in \cite{bierkens2018highdimensional, Deligiannidis2018,Andrieu2018}. For the case of an isotropic target distribution,
the rate of reflections per unit of time is constant for BPS and proportional to $d$ for ZZ with unit speeds in all directions. On the other hand, the time until convergence is of order $d$ for the BPS and $1$ for ZZ. Therefore, the total number of reflections required for convergence of these two algorithms is of the same order which grows linearly with dimension.

For the Boomerang Sampler we consider the following  setting. Consider reference measures $\mu_{0,d}(\dd \x, \dd \v) = \mathcal N(\bm 0, \SSigma_d) \otimes \mathcal N(\bm 0, \SSigma_d)$ for increasing dimension $d$, where for every $d = 1, 2, \dots$, $\SSigma_d$ is a $d$-dimensional positive definite matrix.
Relative to these reference measures we consider a sequence of potential functions $U_d(\x)$.
Thus relative to Lebesgue measure our target distribution $\Pi_d(\dd \x)$ has density $\exp(-E_d(\x))$, where $E_d(\x) = U_d(\x) + \tfrac 1 2 \langle  \x,\SSigma_d^{-1} \x \rangle$. Let $\E_d$ denote expectation with respect to $\Pi_d(\dd \x) \otimes \mathcal N(0, \SSigma_d)(\dd \v)$.
We assume that the sequence $(U_d)$ satisfies
\begin{equation}
\label{eq:potentials-condition}
\sup_{d=1,2,\ldots}\E_d[|\SSigma^{1/2}_d\nabla U_d(\x)|^2]<\infty,
\end{equation}
The condition~\eqref{eq:potentials-condition} arises naturally for instance in the context of Gaussian regression, spatial statistics,  Bayesian inverse problems as well as the setting of the diffusion bridge simulation example described in detail in Section \ref{sec:diffusion-bridge}.

Furthermore we assume that the following form of the Poincar\'e inequality holds,
\begin{equation} \label{eq:poincare-inequality}
\E_d[f_d(\x)^2]\leq
\frac 1 {C^2} \E_d[|\SSigma_d^{1/2}\nabla f_d(\x)|^2]
\end{equation}
with constant $C > 0$ independent of dimension, and where $f_d : \R^d \rightarrow \R$ is any mean zero differentiable function.
A sufficient condition for~\eqref{eq:poincare-inequality} to hold is 
\[
C^2I\preceq \SSigma_d^{1/2}\nabla^2E_d(\x)\SSigma_d^{1/2}=\SSigma_d^{1/2}\nabla^2 U_d(\x)\SSigma_d^{1/2}+I
\]
by the classical Brascamp-Lieb inequality \cite{MR0450480,Bakry_2014};  note that it may also hold in the non-convex case, see e.g. \cite{LorenziBertoldi2007}, Section 8.6.

Under the stated assumptions we argue that (i) the expected number of reflections per unit time scales as $\mathcal O(1)$ with respect to dimension, and (ii) within a continuous time interval that scales as $\mathcal O(1)$, the Boomerang Sampler mixes well. 
Claims (i) and (ii) are provided with a heuristic motivation in the Appendix. A rigorous proof for this claim remains part of our future work.

In the ideal but non-sparse scenario, the computational cost of the event time calculation for the Boomerang Sampler is thus expected to be a factor $d$ smaller compared to BPS and ZZ assuming that the cost per event is the same for these algorithms.
However, this comparison is unrealistic since in general we can not simulate reflections directly. In practice, we need to use the thinning method as discussed in Section \ref{sec:simulation}. The thinning method introduces a significant amount of shadow events (which are rejected after inspection), and the true events usually represent a small portion relative to the number of shadow events. As a result there can be a high cost for calculating shadow events even when the number of true events is small. 

For the FBS, the expected number of events per unit of time is $\sum_{i=1}^d\mathbb{E}_d[(v_i\partial_i U(\x))_+]$. Under the hypothesis above, this is of  $\mathcal O(d^{1/2})$. Thus, the number of events is much bigger than that of the Boomerang. However, as in the case of ZZ, under a sparse model assumption, the cost of calculation per jump is of constant order whereas it is of the order of $d$ for the Boomerang Sampler. Therefore, the Factorised Boomerang Sampler should outperform the Boomerang Sampler for this sparse setup.

\section{Applications and experiments}

\subsection{Logistic regression}
\label{sec:logistic}

As a suitable test bed we consider the logistic regression inference problem. 
Given predictors $\y^{(1)}, \dots, \y^{(n)}$ in $\R^d$, and outcomes $z^{(1)}, \dots, z^{(n)}$ in $\{0,1\}$, we define the log likelihood function as
\begin{align*} & \ell(\x) = -\sum_{i=1}^n \left\{ \log(1 + e^{\x^{\top} \y^{(i)}})- z^{(i)} \x^{\top} \y^{(i)}  \right\}.\end{align*}
Furthermore we impose a Gaussian prior distribution over $\x$ which for simplicity we keep fixed to be a standard normal distribution throughout these experiments. As a result we arrive at the negative log target density
\[ E(\x) = \sum_{i=1}^n \left\{ \log(1 + e^{\x^{\top} \y^{(i)}})- z^{(i)} \x^{\top} \y^{(i)}  \right\} + \tfrac 1 2 \x^{\top} \x.\]
As a preprocessing step when applying the Boomerang Sampler, and all subsampled methods, we find the mode $\x_{\star}$ of the posterior distribution and define $\SSigma$ by~\eqref{eq:setsigma}. We apply the Boomerang Sampler, with and without subsampling.
These samplers are equipped with an affine computational bound and a constant computational bound respectively, both discussed in Appendix B (the affine bound is usually preferred over a constant bound, but a useful affine bound is not available in the subsampling case).

We compare the Boomerang to both BPS and ZZ with and without subsampling. In all subsampling applications we employ appropriate control variance techniques to reduce the variability of the random switching intensities, as discussed in Section~\ref{sec:subsampling}. Furthermore in the dimension dependent study we include the Metropolis adjusted Langevin algorithm (MALA) for comparison. Throughout these experiments we use Effective Sample Size (ESS) per second of CPU time as measure of the efficiency of the methods used. ESS is estimated using the Batch Means method, where we take a fixed number of 50 batches for all our estimates. ESS is averaged over the dimensions of the simulation and then divided by the runtime of the algorithm to obtain ``average ESS per second'' (other ESS summaries could also have been used). The time horizon is throughout fixed at $10,000$ (with 10,000 iterations for MALA). 
For ZZ and BPS the magnitude of the velocities is rescaled to be comparable on average with Boomerang, to avoid unbalanced runtimes of the different algorithms. 
In Figures~\ref{fig:logistic-by-observations} and~\ref{fig:logistic-by-dimensions} the boxplots are taken over 20 randomly generated experiments, where each experiment corresponds to a logistic regression problem with a random (standard normal) parameter, based on randomly generated data from the model.\footnote{The code used to carry out all of the experiments of this paper may be found online at \url{https://github.com/jbierkens/ICML-boomerang}.}
 The refreshment rates for BPS and the Boomerang Samplers are taken to be 0.1.

\begin{figure}[ht!]
\includegraphics[width = \columnwidth]{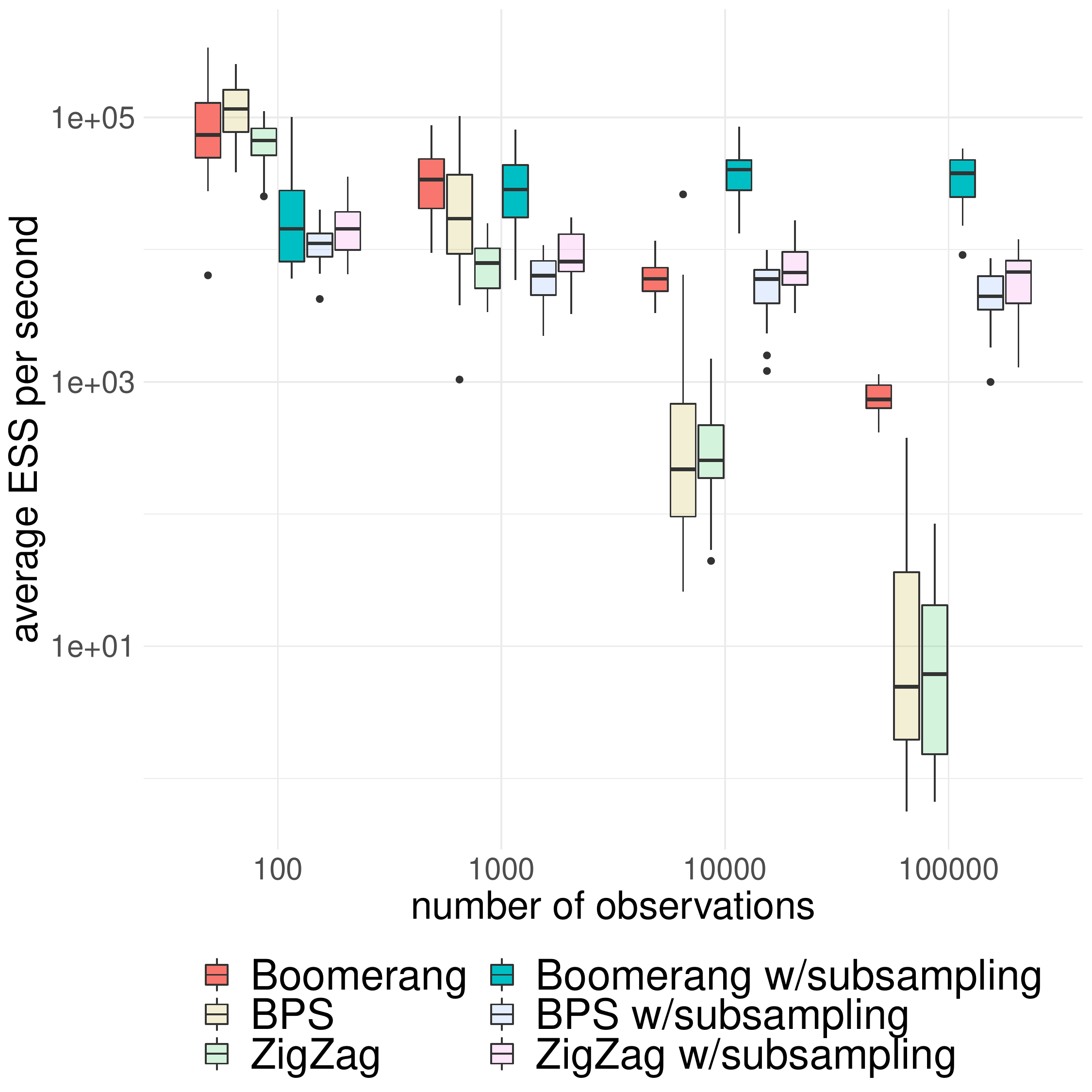}
\caption{Scaling of Boomerang Sampler compared to other PDMC methods for the logistic regression problem of Section~\ref{sec:logistic} as a function of the number of observations. Here $d = 2$.}
\label{fig:logistic-by-observations}
\end{figure}

The Boomerang Sampler is seen to outperform the other algorithms, both in terms of scaling with dimension as with respect to an increase in the number of observations. For a fixed dimension, the subsampling algorithms will clearly outperform the non-subsampling algorithms as number of observations $n$ grows. In particular, the ESS/sec stays fixed for the subsampled algorithms, and decreases as $\mathcal O(n)$ for the non-subsampled versions. In this case, we did not include the MALA algorithm since we observed its complexity strongly deteriorating as the number of observations increases. For a large number of observations ($n \geq 10,000$, $d = 2$) we see that the Boomerang Sampler (with and without subsampling) accepts almost none of the proposed switches. This means that effectively we are sampling from the Gaussian reference measure. This observed behaviour is in line with the scaling analysis in Section~\ref{sec:scaling-with-n}.

In the second experiment we let the dimension $d$ grow for a fixed number of observations. The subsampling algorithms currently do not scale as well as the non-subsampled versions. For practical purposes we therefore only consider non-subsampled algorithms for the comparison with respect to dimensional dependence. For the dimensions $d \leq 32$ we tested the Boomerang outperforms MALA, but it seems empirically that MALA has a better scaling behaviour with dimension. Note that MALA needs careful tuning to exhibit this good scaling.
We remark that the beneficial scaling properties of the underlying Boomerang Process as discussed in Section~\ref{sec:scaling-with-dimension} may be adversely affected by suboptimal computational bounds. We are optimistic that the dimensional scaling of subsampled algorithms can be further improved by designing better computational bounds. 

In all cases the necessary preprocessing steps can be done very quickly. In particular the plots are not affected by including (or excluding) the preprocessing time in the computation of ESS/sec.

\begin{figure}[ht!]
\includegraphics[width = \columnwidth]{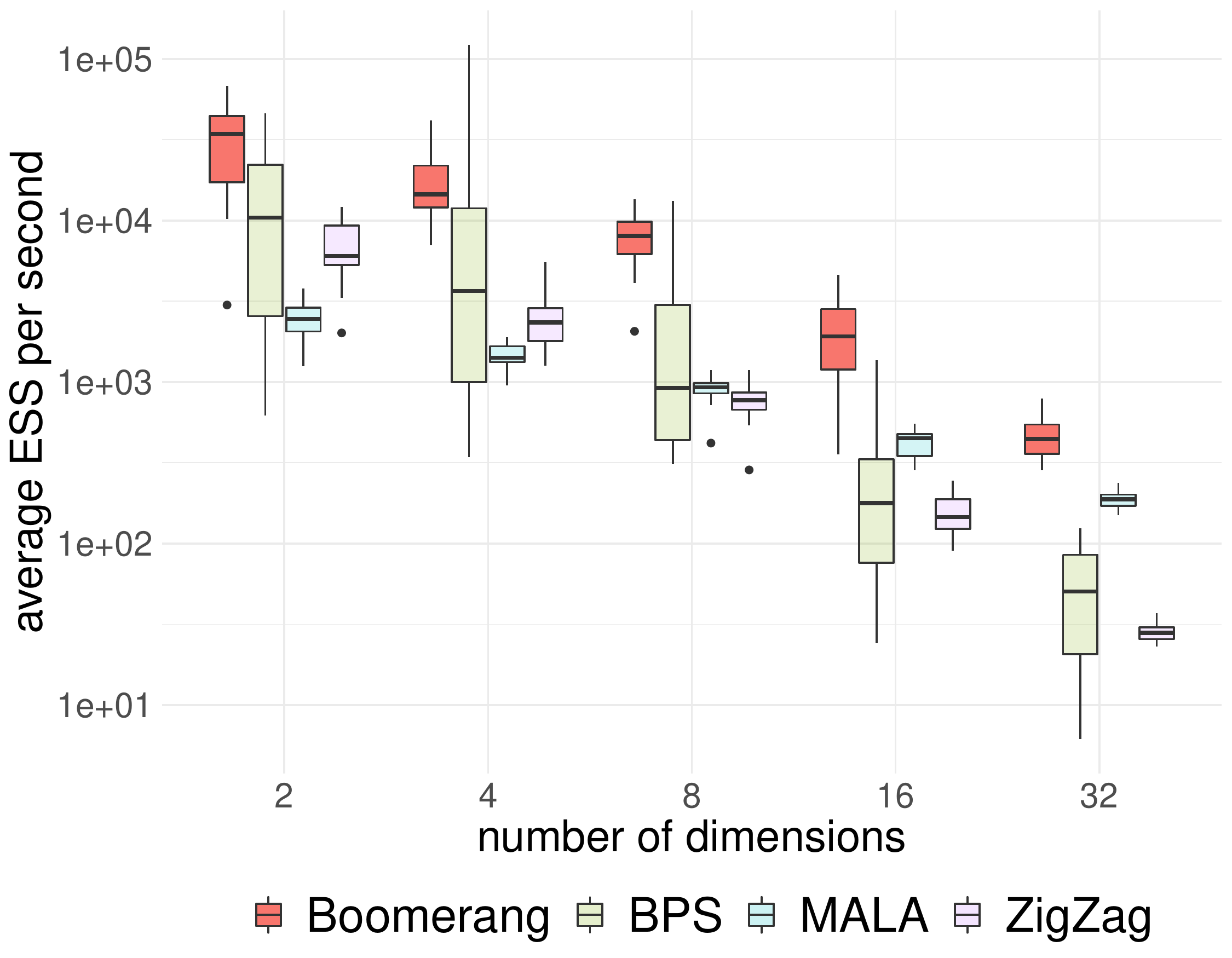}
\caption{Scaling of Boomerang Sampler compared to other PDMC methods and MALA for the logistic regression problem of Section~\ref{sec:logistic} as a function of the number of dimensions. Here the number of observations is $n = 1,000$.}
\label{fig:logistic-by-dimensions}
\end{figure}

\subsection{Diffusion bridges}
\label{sec:diffusion-bridge}
In \cite{bierkens2020piecewise} the authors introduce a framework for the simulation of diffusion bridges (diffusion processes conditioned to hit a prescribed endpoint) taking strong advantage of the use of factorised piecewise deterministic samplers. 
This invites the use of the Factorised Boomerang Sampler (FBS). We consider time-homogeneous one-dimensional conditional diffusion processes (diffusion bridges) of the form 
\[
\dd X_t = b(X_t) \dd t + \dd W_t, \quad X_0 = u, \quad X_T = v
\]
where $W$ is a scalar Brownian motion and $b$ satisfies some mild regularity conditions (see \cite{bierkens2020piecewise} for details). This simulation problem is an essential building block in Bayesian analysis of non-linear diffusion models with low frequency observations \cite{10.2307/2673434}. 

We consider the approach of \cite{bierkens2020piecewise} where the diffusion path on $[0,T]$ is expanded with a truncated Faber Schauder basis as
\[
X^N_t = \Bar{\Bar\phi}(t) u + \Bar{\phi}(t) v + \sum_{i = 0}^N \sum_{j = 0}^{2^i - 1} \phi_{i,j}(t) x_{i,j}.
\]
Here, 
\begin{align*} \Bar \phi(t) & = t/T, \quad \quad  \Bar{\Bar\phi}(t)  = 1 - t/T, \\ \phi_{0,0}(t) & =\sqrt{T}\big ( (t/T)\ind_{[0,T/2]}(t) +(1-t/T)\ind_{(T/2,T]}(t)\big), \\
\phi_{i,j}(t) & = 2^{-i/2}\phi_{0,0}(2^i t - jT) \quad i \geq 0, \quad  0\le j\le 2^i-1, \end{align*}
 are the Faber-Schauder functions and $N$ is the truncation of the expansion. In \cite{bierkens2020piecewise}, ZZ is used to sample the corresponding coefficients $\x := (x_{0,0},...,x_{N,2^N-1})$ which have a density measure written with respect to a standard Gaussian reference measure (corresponding to a Brownian bridge measure in the path space, see \cite{bierkens2020piecewise} for details). In particular we have that
\begin{equation}
    \label{target bridge measure}
    \frac{\dd \mu}{\dd \mu_0}(\x, \v) \propto \exp \left\{ -\frac{1}{2}\int_0^T \left( b^2(X^N_s) + b'(X^N_s)\right)  \dd s \right\}
\end{equation}
where $b'$ is the derivative of $b$ and $\mu_0 = \mathcal{N}(\bm{0}, \bm I ) \otimes \mathcal{N}(\bm{0}, \bm I )$ with $\bm I$ the $2^{N+1}-1$ dimensional identity matrix. The measure given by  \eqref{target bridge measure} has a remarkable conditional independence property (Proposition 2, \cite{bierkens2020piecewise}) and the coefficients $x_{i,j}$, for $i$ large, responsible for the local behaviour of the process, are approximately independent standard Gaussian, reflecting the fact that, locally, the process behaves as a Brownian motion.

In \cite{bierkens2020piecewise} the authors device a local implementation of ZZ which optimally exploits the sparse conditional independence structure of the target distribution, alleviating the computational costs in high dimensional setting (e.g. of a high truncation level  $N$).
Since the Girsanov density~\eqref{target bridge measure} is expressed relative to a standard normal distribution on the coefficients $\x$, the Factorised Boomerang Sampler is a natural candidate for a further reduction in computational cost, by reducing the required number of simulated events, in particular at the higher levels where the coefficients have approximately a Gaussian distribution. This will allow for a further increase of the truncation level $N$ and/or faster computations at a fixed truncation levels.

We consider the the class of diffusion bridges with drift equal to 
\begin{equation}
    \label{drift sin diffusion}
    b(x) = \alpha \sin (x), \, \alpha \ge 0. 
\end{equation}
 The higher $\alpha$, the stronger is the attraction of the diffusion paths to the stable points $(2k-1)\pi, k \in \mathbb{N}$ while for $\alpha = 0$ the process reduces to a Brownian bridge with $\mu = \mu_0$. Equivalently to  \cite{bierkens2020piecewise}, we use subsampling as the gradient of the log density in \eqref{target bridge measure} involves a time integral that cannot be solved analytically in most of the cases. The unbiased estimator for $\partial_{x_{i,j}} U(\x)$ is the integrand evaluated at a uniform random point multiplied by the range of the integral. The Poisson bounding rates used for the subsampling can be found in the Appendix E.

Figure \ref{sin_10} shows the resulting bridges for $\alpha = 1$, starting at $u = -\pi$ and hitting $v = 3\pi$ at final time $T = 50$ after running the FBS with clock $T^\star = 20000$, as simulated on a standard desktop computer. The refreshment rate relative to each coefficient $x_{i,j}$ is fixed to $\lambda_{\mathrm{refr},i,j} = 0.01$ and the truncation of the expansion is $N = 6$.

\begin{figure}[h]
    \centering
    \includegraphics[width = 8cm]{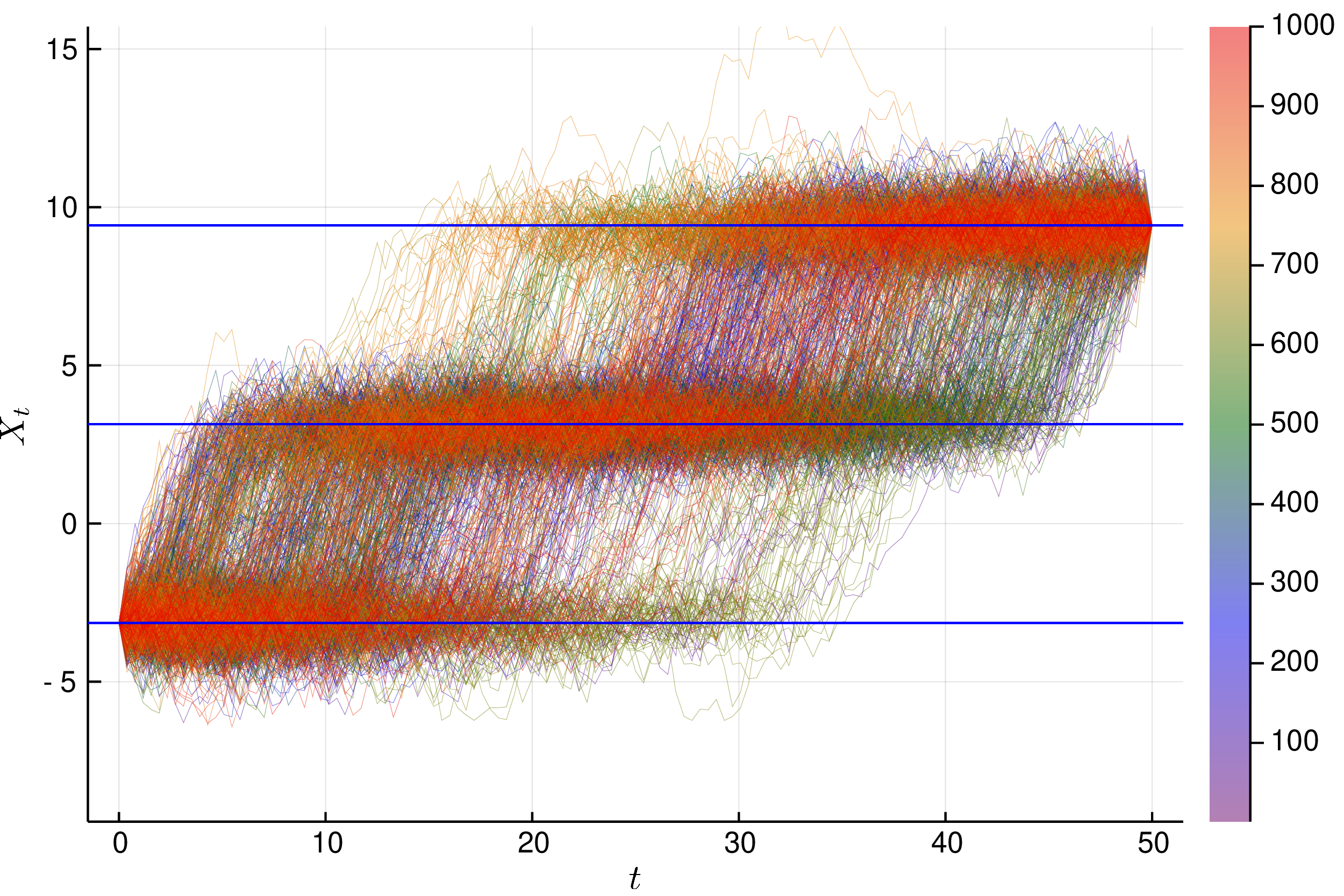}
    \caption{1000 diffusion bridges with drift equal to \eqref{drift sin diffusion} with $\alpha = 1, \, u = -\pi, \, v = 3\pi, \, T = 50, L = 6$ sampled with the FBS with time horizon $T^\star = 20,000$ and  refreshment rates $\lambda_{\mathrm{refr},i}= 0.01$ for all $i$. The straight horizontal lines are the attraction points.}
    \label{sin_10}
\end{figure}

In Figure \ref{comparisons}, we compare the performances of the Boomerang Sampler and ZZ by computing the average number of reflections ($y$-axis on a log-scale) for the coefficients $x_{i,j}$ at each level ($x$-axis). The number of reflections is understood as a measure of complexity of the algorithm. We repeat the experiment for $\alpha = 0.5$ and $\alpha = 0 $ (where $\mu = \mu_0$) and fix the truncation level to be $N=10$ which corresponds to a $2047 + 2047$ dimensional space for ($\x,\v$). For a fair comparison we set the expected $\ell^1$ norm of the velocities and the time horizon of the two samplers to be the same. In both cases, the average number of reflections converges to the average number of reflections under $\mu_0$ (dashed lines) indicating that the coefficients at high levels are approximately standard normally distributed but while ZZ requires a fixed number of reflections for sampling from $\mu=\mu_0$, the Boomerang does not, allowing to high resolutions of the diffusion bridges at lower computational cost.

\begin{figure}[h]
    \centering
    \includegraphics[width = 8cm ]{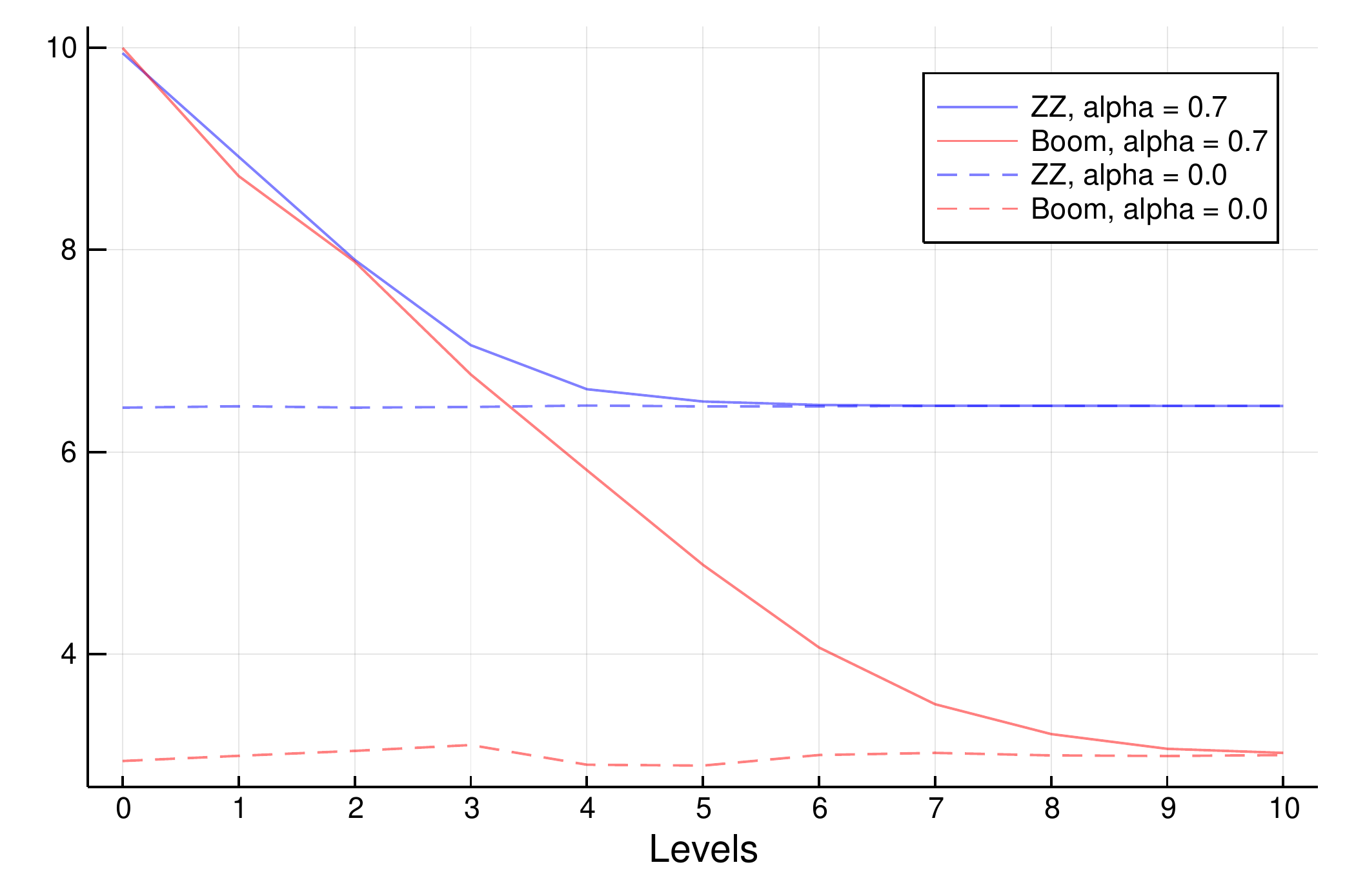}
    \caption{Average number of reflections (on a log-scale) for the coefficients $x_{i,j}$ at the level $i =0,1,..,10$ for the diffusion bridge given by \eqref{drift sin diffusion} with $\alpha = 0.5$ (solid lines) and $\alpha = 0.0$ (dashed lines) for the Zig-Zag Sampler (blue lines)  and the Factorised Boomerang Sampler (red lines) with $T^\star = 2,000$ and Boomerang refreshment rates $\lambda_{\mathrm{refr},i} = 0.01$ for all $i$.}
    \label{comparisons}
\end{figure}

\subsection{Dependence upon reference measure}

In a final experiment we investigate the dependence of the performance of the Boomerang Sampler upon the choice of reference measure.
For this we consider a simple setting in which the target distribution is a standard normal distribution in $d$ dimensions.
However, instead of using the standard normal distribution as reference measure, we perturb it in two ways: (i) we vary the component-wise variance $\sigma^2$ of the reference measure, and (ii) we vary the mean $\x_{\star}$ of the reference measure.
Specifically, we choose a reference measure $\mathcal N(\x_{\star}, \SSigma) \otimes \mathcal N(\bm 0, \SSigma)$, which we choose in case (i) to be $\x_{\star} = \bm 0, \SSigma = \sigma^2 \bm I$, and in case (ii), $\x_{\star} = \alpha (1, \dots, 1)^{\top}$, $\SSigma = \bm I$.
As performance measure we use the ESS per second for the quantity $|\x|^2$. We use refreshment rate $0.1$ for Boomerang, and we compare to the Bouncy Particle Sampler, with refreshment $1.0$, with both samplers run over a time horizon of 10,000.
In Figure~\ref{fig:perturbed-reference-sigma} the results of this experiment are displayed for varying $\sigma^2$, and in Figure~\ref{fig:perturbed-reference-xref} the results are displayed for varying $\x_{\star}$. The box plots are taken over 20 experiments of the Boomerang Sampler, which are compared to a single run of  the Bouncy Particle Sampler (dashed line).

\begin{figure}[ht!]
\centering
\includegraphics[width = 8 cm]{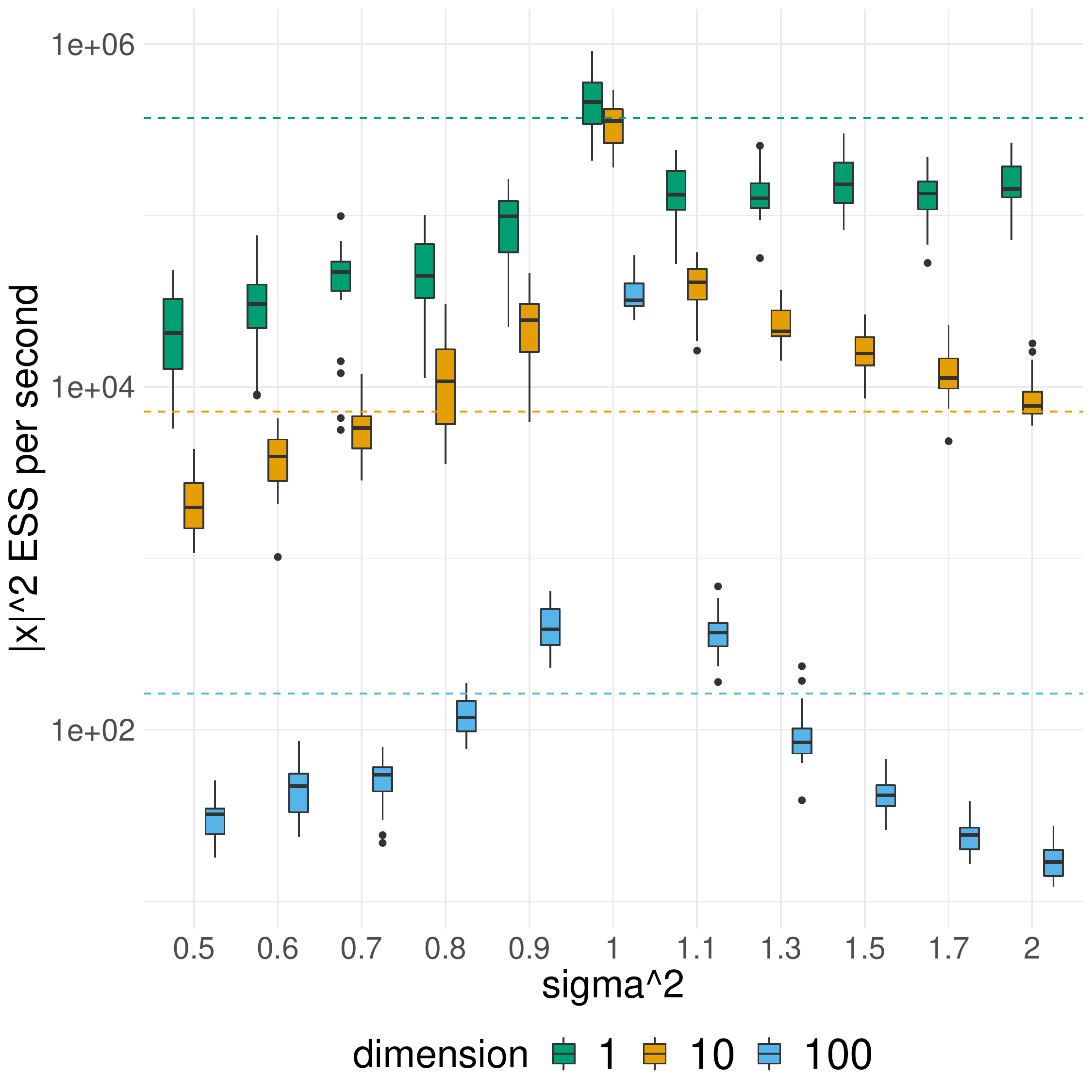}
\caption{Effect of perturbing the variance of the reference measure. As reference measure we choose $\mathcal N(\0, \sigma^2 \bm I) \otimes \mathcal N(\0, \sigma^2 \bm I)$, where $\sigma^2$ is varied from $0.5$ to $2.0$. }
\label{fig:perturbed-reference-sigma}
\end{figure}

\begin{figure}[ht!]
\centering
\includegraphics[width = 8 cm]{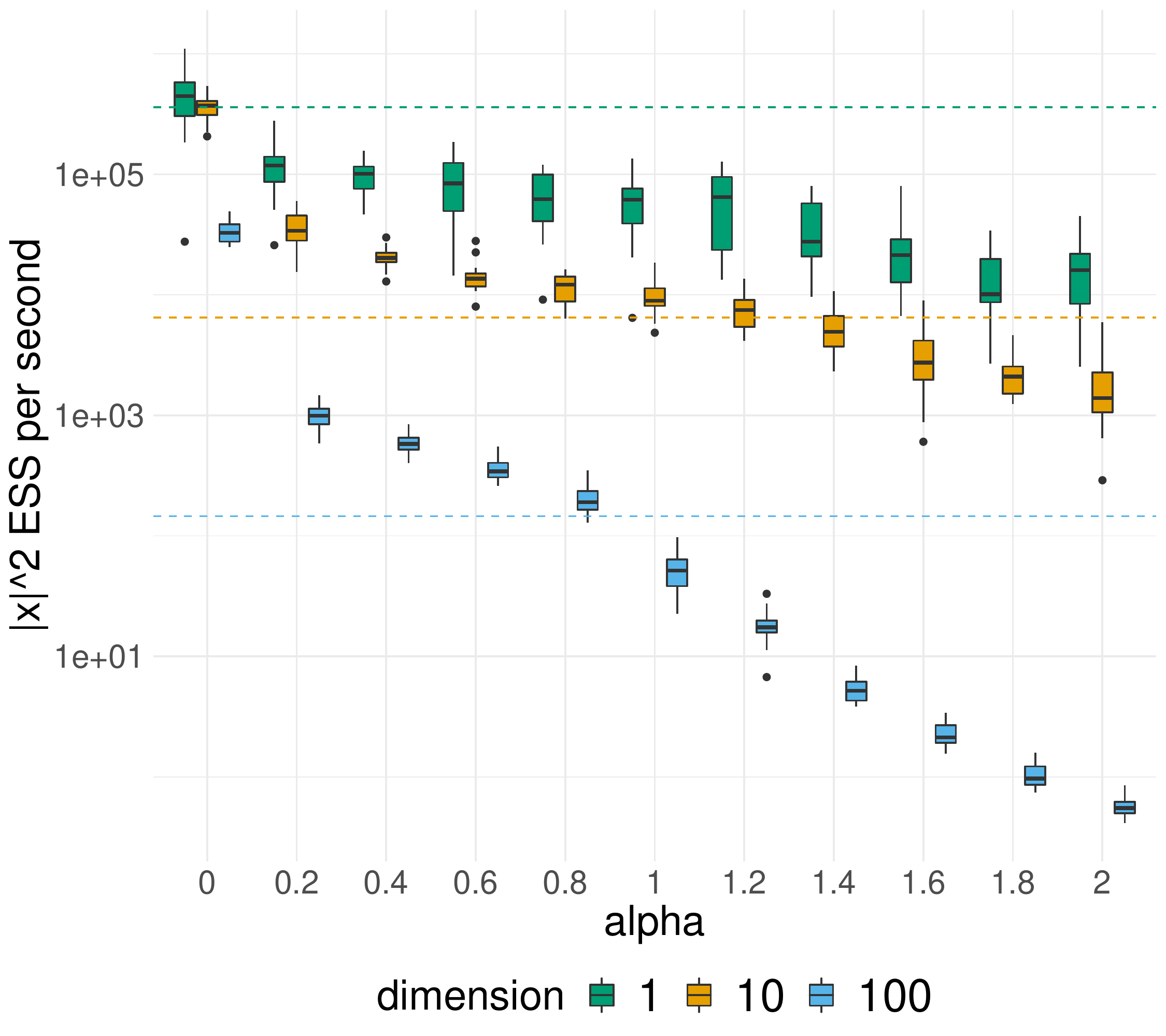}
\caption{Effect of perturbing the mean of the reference measure. As reference measure we choose $\mathcal N(\alpha \bm 1, \bm I) \otimes \mathcal N(\0, \bm I)$, where $\alpha$ is varied from $0.0$ to $2.0$. }
\label{fig:perturbed-reference-xref}
\end{figure}

In this setting, the Boomerang Sampler significantly outperforms the BPS, although the performance is seen to depend upon the choice of reference measure.
Note however that the dependencies of $\SSigma$ on $\sigma^2$ and of $\x_{\star}$ upon $\alpha$ scale as $\operatorname{trace} \SSigma = \sigma^2 d$ and $\|\x_{\star}\| = \alpha d^{1/2}$ respectively, so that in high dimensional cases the sensitivity on $\x_{\star}$ and $\SSigma$ may be more moderate than might appear from Figures~\ref{fig:perturbed-reference-sigma} and~\ref{fig:perturbed-reference-xref}.

\section{Conclusion}

We presented the Boomerang Sampler as a new and promising methodology, outperforming other piecewise deterministic methods in the large $n$, moderate $d$ setting, as explained theoretically and by performing a suitable benchmark test. The theoretical properties of the underlying Boomerang Sampler in high dimension are very good. However currently a large computational bound and therefore a large number of rejected switches are hampering the efficiency. We gave a numerical comparison which demonstrates that Boomerang performs well against its natural competitors; however one should be cautious about drawing too many conclusions about the performance of the Boomerang without a more comprehensive simulation study. Further research is required to understand in more detail the dependence upon e.g. reference covariance $\SSigma$, centering position $\x_{\star}$, refreshment rate, computational bounds and the choice of efficiency measure. 

We furthermore introduced the Factorised Boomerang Sampler and illustrated its ability to tackle a challenging simulation problem using an underlying sparse structure.

An important direction for further research is the improvement of the computational bounds, in particular with the aim of having good scaling with dimension of subsampled algorithms. Related to this it is important to gain a better understanding of the relative optimality of subsampled versus non-subsampled algorithms in the large $n$, large $d$ case. 

\section*{Acknowledgements}
We thank the anonymous reviewers and metareviewer for their helpful comments, which have helped to improve the paper.
JB and SG acknowledge funding under the research programme `Zig-zagging through computational barriers' with project number 016.Vidi.189.043, which is financed by the Netherlands Organisation for Scientific Research (NWO). KK is supported by JSPS KAKENHI Grant Number 20H04149 and JST CREST Grant Number JPMJCR14D7.  GR is supported by EPSRC grants
EP/R034710/1 (CoSInES) and EP/R018561/1 (Bayes for Health), and by Wolfson merit award WM140096.

\appendix

\section{Generator and stationary distribution}
\subsection{Boomerang Sampler}

For simplicity take $\x_{\star} = \0$. The generator of the Boomerang Sampler is defined by
\begin{align*}
\mathcal{L}\psi(\x,\v)&=\langle \v,  \nabla_{\x}\psi(\x,\v)\rangle -\langle \x,  \nabla_{\v}\psi(\x,\v)\rangle\\ &\quad+\lambda(\x,\v)\left(\psi(\x,\bm R(\x)\v)-\psi(\x,\v)\right)\\
&\quad+\lambda_{\mathrm{refr}}\left(\int_{\R^d}\psi(\x,\w)\phi(\w)\, \dif \w-\psi(\x,\v)\right), 
\end{align*}
for any compactly supported differentiable function $\psi$ on $S$, where $\phi$ is the probability density function of $\mathcal{N}(\bm 0,\SSigma)$. 

Taking $\lambda(\x,\v)$ and $\bm R(\x)$ as in Eqs. (2) and (3) of the paper respectively, we will now verify that $\int_S \mathcal L \psi \, d \mu = 0$ for all such functions $\psi$, and for $\mu$ being the measure on $S$ with density $\exp(-U(\x))$ relative to $\mu_0$. This then establishes that the Boomerang Sampler has stationary distribution $\mu$. A complete proof also requires verification that the compactly supported, differentiable functions form a core for the generator, which is beyond the scope of this paper. For a discussion of this topic for archetypal PDMPs see \cite{Holderrieth2019}.

First we consider the terms involving the partial derivatives of $\psi$. By partial integration, we find
\begin{align*}
    & \int_S \langle \v, \nabla_{\x}\psi(\x,\v)\rangle -\langle \x, \nabla_{\v}\psi(\x,\v)\rangle \, \mu(\dd \x, \dd \v) \\
    & = \int_S \psi(\x,\v) \langle \v, \nabla U(\x)\rangle \, \mu(\dd \x, \dd \v)
\end{align*}

Next we inspect the term representing the switches occurring at rate $\lambda(\x,\v)$. By Eq. (5) of the paper, the coordinate transform $\w = \bm R(\x) \v$ (for fixed $\x$) leaves the measure $\mathcal N(\0,\SSigma)$ over the velocity component invariant. Using this observation, we find that \begin{align*}
& \int_S \lambda(\x,\v) (\psi(\x, \bm R(\x) \v) - \psi(\x, \v)) \,\mu (\dd \x, \dd \v) \\
& = \int_S \lambda(\x, \bm R(\x) \w) \psi(\x, \w) \,\mu(d \x, d \w) \\
& \quad - \int_S \lambda(\x, \v) \psi(\x, \v) \, \mu(\dd \x, \dd \v) \\
& = \int_S [ \lambda(\x, \bm R(\x) \v) - \lambda(\x, \v)] \psi(\x, \v) \, \mu(\dd \x, \dd \v).
\end{align*}
Using Eq.~(2) and (4) of the paper, and the identity $(-a)_+ - (a)_+ = -a$, it follows that this expression is equal to
\begin{align*}
    & \int_S \left[ \langle \bm R(\x) \v, \nabla U(\x)\rangle_+ - \langle \v, \nabla U(\x) \rangle_+ \right] \psi(\x, \v) \, \mu(\dd \x, \dd \v) \\
    & = -\int_S \langle \v, \nabla U(\x)\rangle   \psi(\x, \v) \, \mu(\dd \x, \dd \v).
\end{align*}

Finally by changing the order of integration, it can be shown that
\[ \int_S \lambda_{\mathrm{refr}}\left(\int_{\R^d}\psi(\x,\v)\phi(\v)\, \dif \v-\psi(\x,\v)\right) \,  \mu_0(\dd \x, \dd \v) = 0.\]

Adding all terms yields that $\int_S \mathcal L \psi \, \dd \mu = 0$.

\subsection{Factorised Boomerang Sampler}
The Factorised Boomerang Sampler has  generator
\begin{align*}
    \mathcal L\psi(\x, \v) & = \langle \v, \nabla_{\x} \psi(\x,\v) \rangle - \langle \x, \nabla_{\v} \psi(\x, \v) \rangle \\
    & \quad + \sum_{i=1}^d \lambda_i(\x, \v) (\psi(\x, \bm F_i(\v)) - \psi(\x, \v)) \\
    & \quad + \lambda_{\refr} \left( \int \psi(\x, \w)  \phi(\w) \, d \w  -\psi(\x, \v) \right).
\end{align*}
Verifying stationarity of $\mu$ is done analogously to the case of the non-factorised Boomerang Sampler, but now has to be carried out componentwise.

\section{Computational bounds}

Suppose $(\x_t, \v_t)$ satisfies the Hamiltonian dynamics ODE of Eq. (1) in the paper, starting from $(\x_0, \v_0)$ in $\R^d \times \R^d$.
Throughout we assume $U : \R^d \rightarrow \R$ is a twice continuously differentiable function with Hessian matrix $\nabla^2 U$.
Furthermore we assume without loss of generality that $\x_{\star} = \bm 0$. 
First we consider bounds for switching intensities of the form $\lambda(\x,\v) = \langle \v, \nabla U(\x) \rangle_+$. For a matrix $\bm A \in \R^{d \times d}$ we use $\|\bm A\|$ to denote the matrix norm induced by the Euclidean metric.

\begin{lemma}[Constant bound]
\label{lem:constant-bound}
Suppose there exists a constant $M > 0$ such that for all $\x \in \R^d$ we have the global bound
\[ \| \nabla^2 U(\x)\| \leq M. \]
 Define $m := |\nabla U(\0)|$.
Then for all $t \geq 0$,
\begin{equation}
    \label{eq:affine-bound} \lambda(\x_t,\v_t) \leq \frac{M}{2} (|\x_0|^2 + |\v_0|^2)  + m \sqrt{|\x_0|^2 + |\v_0|^2}.
\end{equation}
\end{lemma}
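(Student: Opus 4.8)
The plan is to reduce the time-dependent bound to a purely algebraic estimate by exploiting the energy conservation of the Hamiltonian flow. The key observation is that, taking $\bm Q = \bm I$ in the invariance property noted just after~\eqref{eq:boomerang-ode}, the quantity $|\x_t|^2 + |\v_t|^2$ is constant along trajectories, so that $|\x_t|^2 + |\v_t|^2 = |\x_0|^2 + |\v_0|^2 =: r^2$ for all $t \geq 0$, and in particular $|\x_t| \leq r$ and $|\v_t| \leq r$ uniformly in $t$. Since the positive part is dominated by the absolute value, it suffices to bound $|\langle \v_t, \nabla U(\x_t)\rangle|$ by the right-hand side of~\eqref{eq:affine-bound}; once this estimate is expressed through the conserved quantity $r^2$, it will automatically hold for every $t$.

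First I would expand the gradient about the reference point $\0$ by the fundamental theorem of calculus,
\[ \nabla U(\x_t) = \nabla U(\0) + \int_0^1 \nabla^2 U(s \x_t)\, \x_t \, \dd s, \]
which is licensed by the assumed twice continuous differentiability of $U$. Pairing with $\v_t$ and using the triangle inequality then splits the intensity into a constant-gradient contribution and a curvature contribution. For the first, Cauchy--Schwarz gives $|\langle \v_t, \nabla U(\0)\rangle| \leq m\,|\v_t| \leq m r$. For the second, moving the absolute value inside the integral and invoking the definition of the induced operator norm together with the global Hessian bound yields
\[ \left| \left\langle \v_t, \int_0^1 \nabla^2 U(s\x_t)\,\x_t \, \dd s \right\rangle \right| \leq \int_0^1 \|\nabla^2 U(s\x_t)\|\, |\v_t|\,|\x_t| \, \dd s \leq M\, |\v_t|\,|\x_t|. \]

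The final step is to apply the elementary inequality $|\v_t|\,|\x_t| \leq \tfrac 1 2\bigl(|\v_t|^2 + |\x_t|^2\bigr) = \tfrac 1 2 r^2$ and to substitute $r^2 = |\x_0|^2 + |\v_0|^2$, which reproduces~\eqref{eq:affine-bound} exactly. There is no serious obstacle here: the manipulations are routine once the right structure is in place. The only point requiring genuine care is the conceptual move at the outset—recognising that the conservation of $|\x_t|^2 + |\v_t|^2$ is precisely what removes the time dependence and upgrades a pointwise estimate into a bound that is uniform, indeed constant, in $t$, consistent with the notion of a \emph{constant} computational bound discussed in Section~\ref{sec:simulation}.
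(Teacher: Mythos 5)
Your proof is correct and follows essentially the same route as the paper's: a first-order Taylor expansion of $\nabla U$ about $\0$, Cauchy--Schwarz plus the bound $|\v|\,|\x| \leq \tfrac12(|\v|^2+|\x|^2)$ for the Hessian term, and the invariance of $|\x_t|^2+|\v_t|^2$ under the dynamics to make the bound constant in $t$. No substantive differences.
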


\begin{proof}
We have the following estimate on the switching intensity.
\begin{align*}
    & \lambda(\x,\v)  = \langle \v, \nabla U(\x) \rangle_+ \\
    & \leq \langle \v, \nabla U(\bm 0)\rangle_+ + \int_0^1 |\langle \v, \nabla^2 U(\x s) \bm x\rangle| \,  \dd s.
\end{align*}
We may bound the inner product in the integrand as follows.
\begin{align*}
|\langle \v, \nabla^2 U(\bm y) \bm x \rangle|
&\le \|\nabla^2 U(\bm x)\|~|\v|~|\bm x|\\
&\le M~\left(\frac{|\v|^2+|\bm x|^2}{2}\right)
\end{align*}
by the Cauchy–Schwarz inequality. 
Also \begin{align*}
& |\langle \v, \nabla U(\bm 0) \rangle| \leq m |\v| \leq m  \sqrt{|\x|^2 + |\v|^2}.
\end{align*}
Combining these estimates and the fact that $|\x_t|^2 + |\v_t|^2$ is invariant under the dynamics of Eq. (1) in the paper yields the stated result.
\end{proof}

\begin{lemma}[Affine bound]
\label{lem:affine-bound}
Suppose $\|\nabla^2 U(\x)\| \leq M$ for all $\x \in \R^d$, and let $m = |\nabla U(\0)|$. Then for a solution $(\x_t, \v_t)$ to Eq. (1) of the paper with $\lambda(\x,\v) = \langle \v, \nabla U(\x) \rangle_+$, we have for all $t \geq 0$
\begin{align*}
    \lambda(\x_t, \v_t)\leq \left(a(\x_0, \v_0) +  t b(\x_0, \v_0) \right)_+,
\end{align*}
where 
\begin{align*}
 a(\x,\v) & = \langle \v, \nabla U(\x)\rangle_+, \quad \text{and} \\
 b(\x,\v) & = M \left(|\x|^2+|\v|^2\right) + m \sqrt{|\x|^2+|\v|^2}.
 \end{align*}
\end{lemma}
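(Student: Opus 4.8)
The plan is to reduce the assertion to an elementary one–variable differential inequality for the scalar quantity $g(t) := \langle \v_t, \nabla U(\x_t)\rangle$, so that $\lambda(\x_t,\v_t) = (g(t))_+$. Since $s\mapsto (s)_+$ is nondecreasing, it suffices to prove the affine pointwise bound $g(t) \le a(\x_0,\v_0) + t\, b(\x_0,\v_0)$ and then take positive parts. Moreover, because $g(0) = \langle \v_0, \nabla U(\x_0)\rangle \le \langle \v_0, \nabla U(\x_0)\rangle_+ = a(\x_0,\v_0)$, it is in fact enough to establish $g(t) \le g(0) + t\, b(\x_0,\v_0)$; by the fundamental theorem of calculus this will follow once I show the uniform derivative bound $g'(t) \le b(\x_0,\v_0)$ for every $t \ge 0$.

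First I would differentiate $g$ along the flow \eqref{eq:boomerang-ode}. Using $\dd \x_t/\dd t = \v_t$ and $\dd \v_t/\dd t = -\x_t$, the chain rule gives
\begin{align*}
g'(t) &= \left\langle \frac{\dd \v_t}{\dd t}, \nabla U(\x_t)\right\rangle + \left\langle \v_t, \nabla^2 U(\x_t)\frac{\dd \x_t}{\dd t}\right\rangle \\
&= -\langle \x_t, \nabla U(\x_t)\rangle + \langle \v_t, \nabla^2 U(\x_t)\v_t\rangle.
\end{align*}
The quadratic term is immediately controlled by Cauchy--Schwarz and the global Hessian bound: $\langle \v_t, \nabla^2 U(\x_t)\v_t\rangle \le \|\nabla^2 U(\x_t)\|\,|\v_t|^2 \le M|\v_t|^2$.

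The \emph{main obstacle} is the remaining term $-\langle \x_t, \nabla U(\x_t)\rangle$, which carries no a priori sign and is not controlled by the Hessian bound by itself. Here I would expand the gradient about the reference point $\0$, writing $\nabla U(\x_t) = \nabla U(\0) + \int_0^1 \nabla^2 U(s\x_t)\x_t\, \dd s$, exactly as in the proof of the constant bound (Lemma~\ref{lem:constant-bound}). This splits the term into a linear piece $-\langle \x_t, \nabla U(\0)\rangle \le m|\x_t|$ and a curvature piece bounded by $\int_0^1 |\langle \x_t, \nabla^2 U(s\x_t)\x_t\rangle|\, \dd s \le M|\x_t|^2$. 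Combining with the estimate for the quadratic term yields $g'(t) \le M(|\x_t|^2 + |\v_t|^2) + m|\x_t|$.

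Finally I would invoke the conservation law for the flow noted after \eqref{eq:boomerang-ode} (with the choice $\bm Q = \bm I$), namely that $|\x_t|^2 + |\v_t|^2 \equiv |\x_0|^2 + |\v_0|^2$, which in particular also gives $|\x_t| \le \sqrt{|\x_0|^2 + |\v_0|^2}$ for all $t$. Substituting these into the previous estimate produces $g'(t) \le M(|\x_0|^2 + |\v_0|^2) + m\sqrt{|\x_0|^2 + |\v_0|^2} = b(\x_0,\v_0)$, which is precisely the claimed slope and is independent of $t$. Integrating this over $[0,t]$, using $g(0) \le a(\x_0,\v_0)$, and applying the positive part then delivers $\lambda(\x_t,\v_t) = (g(t))_+ \le (a(\x_0,\v_0) + t\, b(\x_0,\v_0))_+$. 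Apart from the sign-indefinite term $-\langle \x_t, \nabla U(\x_t)\rangle$, every step is a routine consequence of Cauchy--Schwarz, the global Hessian bound, and the conservation law.
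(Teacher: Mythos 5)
Your proposal is correct and follows essentially the same route as the paper's proof: differentiate $\langle \v_t,\nabla U(\x_t)\rangle$ along the flow, split $-\langle \x_t,\nabla U(\x_t)\rangle$ via the expansion $\nabla U(\x_t)=\nabla U(\0)+\int_0^1\nabla^2 U(s\x_t)\x_t\,\dd s$, bound each term using the Hessian bound, and conclude with the invariance of $|\x_t|^2+|\v_t|^2$. You merely make explicit the final integration step and the monotonicity of the positive part, which the paper leaves implicit.
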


\begin{proof}
By the Hamiltonian dynamics,
\begin{align*}
   &  \frac{\dd}{\dd t} \langle \v_t, \nabla U(\x_t) \rangle \\
   & = - \langle \x_t, \nabla U(\x_t) \rangle + \langle \v_t, \nabla^2 U(\x_t) \v_t \rangle \\
    & = -\langle \x_t, \nabla U(\bm 0) \rangle - \int_0^1 \langle \x_t, \nabla^2 U(s \x_t) \x_t \rangle \, \dd s \\
    & \quad \quad + \langle \v_t, \nabla^2 U(\x_t) \v_t \rangle \\
    & \leq |\x_t| |\nabla U(\bm 0) | + M \left( |\x_t|^2 + | \v_t|^2 \right).
\end{align*}
Using that $|\x_t|^2 + |\v_t|^2$ is invariant under the dynamics yields the stated result.
\end{proof}

\begin{lemma}
\label{lem:bounded_gradient}
Suppose $|\nabla U(\y)| \leq C$ for all $\y \in \R^d$. Then, for all trajectories $(\x_t, \v_t)$ satisfying Eq.~(1) of the paper we have
\[ \lambda(\x_t, \v_t) \leq C \sqrt{|\x_0|^2 + |\v_0|^2}.\]
\end{lemma}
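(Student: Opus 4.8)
The plan is to bound the switching intensity $\lambda(\x_t, \v_t) = \langle \v_t, \nabla U(\x_t)\rangle_+$ pointwise along any trajectory, using only the global gradient bound together with the conservation law associated with the dynamics~\eqref{eq:boomerang-ode}. First I would discard the positive part via $(a)_+ \leq |a|$, so that $\lambda(\x_t, \v_t) \leq |\langle \v_t, \nabla U(\x_t)\rangle|$. Applying the Cauchy--Schwarz inequality then gives $|\langle \v_t, \nabla U(\x_t)\rangle| \leq |\v_t|\,|\nabla U(\x_t)|$, and the hypothesis $|\nabla U(\y)| \leq C$ for all $\y \in \R^d$ immediately yields $\lambda(\x_t, \v_t) \leq C\,|\v_t|$.

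The second step is to control $|\v_t|$ by the full phase-space norm and then transfer the estimate back to the initial condition. Since $|\v_t|^2 \leq |\x_t|^2 + |\v_t|^2$, we have $|\v_t| \leq \sqrt{|\x_t|^2 + |\v_t|^2}$. It remains to recognise that $|\x_t|^2 + |\v_t|^2$ is conserved along the flow: this is exactly the invariance noted directly after~\eqref{eq:boomerang-ode}, specialised to $\bm Q = \bm I$ and (as assumed throughout this appendix) $\x_{\star} = \0$. Hence $|\x_t|^2 + |\v_t|^2 = |\x_0|^2 + |\v_0|^2$ for all $t \geq 0$, and combining with the previous step gives $\lambda(\x_t, \v_t) \leq C\sqrt{|\x_0|^2 + |\v_0|^2}$, as claimed.

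There is essentially no obstacle here: this is the same constant-bound computation already sketched in Section~\ref{sec:simulation} for the special case $|\nabla U(\x)| \leq m$, recorded separately as a lemma for use in the thinning scheme. The only points meriting any care are the direction of the Cauchy--Schwarz estimate (bounding the signed inner product by the product of norms before invoking the gradient bound) and the explicit appeal to the energy conservation $|\x_t|^2 + |\v_t|^2 = |\x_0|^2 + |\v_0|^2$, which is what makes the right-hand side a genuine deterministic function of the initial data $(\x_0, \v_0)$ and hence usable as a computational bound $\overline{\lambda}(t)$ in the Poisson thinning procedure.
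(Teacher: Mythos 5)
Your proof is correct and is essentially identical to the paper's own argument: bound $\lambda(\x,\v)=\langle \v,\nabla U(\x)\rangle_+ \leq C|\v| \leq C\sqrt{|\x|^2+|\v|^2}$ via Cauchy--Schwarz and the gradient bound, then use invariance of $|\x_t|^2+|\v_t|^2$ along the flow. Nothing is missing.
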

\begin{proof}
We have
\[ \lambda(\x, \v) \leq C |\v| \leq C \sqrt{|\x|^2 + |\v|^2},\]
and the latter expression is constant along trajectories.
\end{proof}

Analogously we have the following useful bound for the Factorized Boomerang Sampler.
\begin{lemma}
\label{lem:factorized_bound_constant_gradient}
Suppose $U : \R^d \rightarrow \R$ is differentiable. Suppose there exist constants $c_1, \dots, c_d$ such that, for all $\y \in \R^d$ and $i = 1, \dots, d$, we have 
\[
|\partial_i U(\x)| \leq c_i \quad \text{for all $\x$, $i$}.
\]
Then 
\[ \lambda_i(\x_t, \v_t) \le c_i \sqrt{|x^i_0|^2 + |v^i_0|^2}.\]
\end{lemma}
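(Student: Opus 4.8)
The plan is to follow the same reasoning as in the proof of Lemma~\ref{lem:bounded_gradient}, but to carry it out separately for each velocity component, exploiting the fact that the Boomerang dynamics decouple across coordinates. Recall that for the Factorised Boomerang Sampler the $i$-th switching intensity is $\lambda_i(\x,\v) = (v_i \partial_i U(\x))_+$, so the whole argument reduces to controlling a single scalar quantity along the flow.

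First I would discard the positive part and invoke the hypothesis, writing
\[ \lambda_i(\x,\v) = (v_i \partial_i U(\x))_+ \leq |v_i|\,|\partial_i U(\x)| \leq c_i |v_i|. \]
Thus it suffices to bound $|v^i_t|$ uniformly in $t$ along the trajectory.

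The key observation is that, with $\x_{\star} = \0$, the ODE of Eq.~(1) splits into $d$ independent planar systems $\tfrac{\dd}{\dd t} x^i_t = v^i_t$, $\tfrac{\dd}{\dd t} v^i_t = -x^i_t$, one for each coordinate pair $(x^i, v^i)$. Consequently the scalar $|x^i_t|^2 + |v^i_t|^2$ is a conserved quantity of the flow for each $i$ \emph{separately} — this is the componentwise refinement of the invariance of $|\x_t|^2 + |\v_t|^2$ already used in the earlier lemmas, and it follows either directly from the explicit solution $x^i_t = x^i_0 \cos t + v^i_0 \sin t$, $v^i_t = -x^i_0 \sin t + v^i_0 \cos t$, or by differentiating $|x^i_t|^2 + |v^i_t|^2$ and substituting the ODE.

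Finally I would combine these two facts: since $|v^i_t| \leq \sqrt{|x^i_t|^2 + |v^i_t|^2} = \sqrt{|x^i_0|^2 + |v^i_0|^2}$ by componentwise conservation, we obtain $\lambda_i(\x_t, \v_t) \leq c_i \sqrt{|x^i_0|^2 + |v^i_0|^2}$, as claimed. I do not anticipate a substantial obstacle here; the only point meriting care is the verification that the dynamics conserve each $|x^i|^2 + |v^i|^2$ individually rather than merely the aggregate $|\x|^2 + |\v|^2$, since it is precisely this per-coordinate invariance that makes the bound hold component by component.
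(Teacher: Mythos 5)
Your proof is correct and is exactly the argument the paper intends: it states this lemma without proof as ``analogous'' to Lemma~\ref{lem:bounded_gradient}, and the componentwise conservation of $|x^i_t|^2+|v^i_t|^2$ that you identify as the key point is precisely the fact the paper invokes later in the proof of Lemma~\ref{lem:factorized_affine_bound}.
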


\begin{lemma}
\label{lem:factorized_affine_bound}
Suppose for all $i$ we have that
\[
    \sqrt{\sum_j \partial_i \partial_j U(\x)^2} \le  M_i, 
    \]
and
\[ |\partial_i U(\0)| \leq m_i.\]
    Then 
    \[\lambda_i(\x_t, \v_t) \le  \left(a_i(\x_0, \v_0) + b_i(\x_0, \v_0)  t\right)^+\]
    where 
    \[
    a_i(\x, \v) = (v^{i} \partial_i U(\x))^+
    \]
    \begin{align*}
    & b_i(\x, \v)\\
    & =\sqrt{(x^i)^2 + (v^i)^2} \left ( m_i + M_i \sqrt{|\x|^2 + |\v|^2} \right).\end{align*}
\end{lemma}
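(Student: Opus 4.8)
The plan is to mirror the argument of Lemma~\ref{lem:affine-bound}, but carried out componentwise. First I would introduce the unclipped quantity $g_i(t) := v^i_t\, \partial_i U(\x_t)$, so that $\lambda_i(\x_t,\v_t) = (g_i(t))_+$. Since $a \mapsto (a)_+$ is nondecreasing, it suffices to prove the affine bound $g_i(t) \le a_i(\x_0,\v_0) + b_i(\x_0,\v_0)\,t$ for $g_i$ and then apply $(\cdot)_+$ to both sides.

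Next I would differentiate $g_i$ along the flow of Eq.~(1). Using $\dot{\v}_t = -\x_t$ and the chain rule $\frac{\dd}{\dd t}\partial_i U(\x_t) = (\nabla^2 U(\x_t)\v_t)_i$, one finds
\[ \frac{\dd}{\dd t} g_i(t) = -x^i_t\, \partial_i U(\x_t) + v^i_t\, (\nabla^2 U(\x_t)\v_t)_i, \]
and I would bound the two terms separately. For the first, a Taylor expansion $\partial_i U(\x_t) = \partial_i U(\0) + \int_0^1 (\nabla^2 U(s\x_t)\x_t)_i\, \dd s$, combined with the row bound $\sqrt{\sum_j \partial_i\partial_j U(\cdot)^2}\le M_i$ and Cauchy--Schwarz, gives $|\partial_i U(\x_t)| \le m_i + M_i|\x_t|$, so this term is at most $|x^i_t|\,(m_i + M_i|\x_t|)$. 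For the second, Cauchy--Schwarz applied to the $i$-th row of the Hessian yields $|(\nabla^2 U(\x_t)\v_t)_i| \le M_i|\v_t|$, so it is at most $M_i\,|v^i_t|\,|\v_t|$.

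I would then collect these into the target form. Writing the result as $m_i|x^i_t| + M_i\bigl(|x^i_t|\,|\x_t| + |v^i_t|\,|\v_t|\bigr)$ and applying the two-dimensional Cauchy--Schwarz inequality to the pairs $(|x^i_t|,|v^i_t|)$ and $(|\x_t|,|\v_t|)$ produces
\[ \frac{\dd}{\dd t} g_i(t) \le \sqrt{(x^i_t)^2 + (v^i_t)^2}\,\bigl(m_i + M_i\sqrt{|\x_t|^2 + |\v_t|^2}\bigr) = b_i(\x_t,\v_t). \]
The crucial final observation is that under the dynamics of Eq.~(1) both $(x^i_t)^2 + (v^i_t)^2$ (the per-component energy) and $|\x_t|^2 + |\v_t|^2$ are conserved: the latter is the $\bm Q = \bm I$ case of the invariance noted after Eq.~(1), and the former follows from the fact that each coordinate pair rotates independently. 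Hence $b_i(\x_t,\v_t) \equiv b_i(\x_0,\v_0)$ is constant along the trajectory, so integrating $\frac{\dd}{\dd t}g_i \le b_i(\x_0,\v_0)$ over $[0,t]$ gives $g_i(t) \le g_i(0) + b_i(\x_0,\v_0)\,t$. Since $g_i(0) \le (g_i(0))_+ = a_i(\x_0,\v_0)$, the claim follows upon taking positive parts.

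I do not anticipate a genuine obstacle. The only point requiring care is the grouping in the third step, where Cauchy--Schwarz must be applied jointly to the two vectors $(|x^i_t|,|v^i_t|)$ and $(|\x_t|,|\v_t|)$; bounding the mixed terms $|x^i_t|\,|\x_t|$ and $|v^i_t|\,|\v_t|$ separately would not close to the stated $b_i$. The double invariance that renders $b_i$ constant is inherited directly from the Hamiltonian structure and needs no extra work.
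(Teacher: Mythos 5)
Your proposal is correct and follows essentially the same route as the paper: differentiate $v^i_t\,\partial_i U(\x_t)$ along the flow, Taylor-expand $\partial_i U$ around $\0$, bound the Hessian rows by $M_i$, and use the two conserved quantities to freeze the bound at $t=0$. The only cosmetic difference is that you obtain the product $\sqrt{(x^i)^2+(v^i)^2}\,\sqrt{|\x|^2+|\v|^2}$ by a direct two-dimensional Cauchy--Schwarz, whereas the paper uses Young's inequality with a parameter $\alpha$ and then optimises over $\alpha$ --- these are the same estimate.
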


 \begin{proof}
We compute
\begin{align*}
 & \frac{\dd}{\dd t}v^i_t \partial_i U(\x_t) \\ 
& = -x^i_t \partial_i U(\x_t) + v^i_t \sum_{j=1}^d \partial_i \partial_j U(\x_t)v^j_t \\
& = -x_t^i \partial_i U(\0) - \int_0^1 x^i_t \sum_{j=1}^d\partial_i \partial_j U(s\x_t )x^j_t \dd s \\ & \quad \quad + v^i_t \sum_{j=1}^d  \partial_i \partial_j U(\x_t)v^j_t \\
& \leq \sqrt{(x_t^i)^2 + (v_t^i)^2} |\partial_i U (\0)| + M_i |x_t^i| |\x_t| + M_i |v_t^i| |\v_t|\\
& \leq \sqrt{(x_t^i)^2 + (v_t^i)^2} |\partial_i U (\0)| \\
&  \quad + M_i/2 \left(  \alpha( |x_t^i|^2 + |v_t^i|^2)  +  (1/\alpha)(|\x_t|^2 +  |\v_t|^2) \right).
\end{align*}
Optimising over $\alpha$, and using that $|x^i_t|^2 + |v^i_t|^2$ is constant along Factorised Boomerang Trajectories, yields the stated result.
\end{proof}    

\subsection{Computational bounds for subsampling}
\label{sec:subsampling_computationalbound}
In the case of subsampling we use the unbiased estimator of Eq.~(9) of the paper. 

\begin{lemma}
\label{lem:computational_bound_subsampling}
Suppose that for some positive definite matrix $\bm Q$ we have 
that, for all $i$, and $\y_1, \y_2 \in \R^d$, 
\begin{equation} \label{eq:condition-Q-subsampling}
    \nabla^2 E^i(\y_1) - \nabla^2 E^i(\y_2) \preceq {\bm Q},
\end{equation}
where $A\preceq B$ means $B-A$ is positive semidefinite.
Suppose $\widehat{\nabla  U(\x)}$ is given by Eq.~(9) of the paper, and $\nabla E(\0) = \0$. Along a trajectory $(\x_t, \v_t)$ satisfying the Hamiltonian dynamics of Eq.~(1) of the paper, we have, for all $t \geq 0$, that
\[ \langle \v_t, \widehat{\nabla U(\x_t)} \rangle \leq \tfrac 1 2 (|{\bm Q}^{1/2} \x_0|^2 +|{\bm Q}^{1/2} \v_0|^2 ), \quad \text{a.s.}\]
where the almost sure statement is with respect to all random (subsampling) realisations of the switching intensity.
\end{lemma}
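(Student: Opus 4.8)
The plan is to reduce the bound on the random inner product $\langle \v_t, \widehat{\nabla U(\x_t)}\rangle = \langle \v_t, \bm G^I(\x_t)\rangle$ to a deterministic, $I$-independent quadratic estimate, and then to invoke the invariance of an appropriate quadratic form along the Hamiltonian flow of Eq.~(1).

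First I would rewrite $\bm G^i(\x)$ in integral form. Since $\nabla E(\0) = \0$, the control-variate estimator of Eq.~(9) reduces to $\bm G^i(\x) = \nabla E^i(\x) - \nabla E^i(\0) - \nabla^2 E^i(\0)\x$. Applying the fundamental theorem of calculus to $\nabla E^i$ along the segment from $\0$ to $\x$ gives
\[ \bm G^i(\x) = \int_0^1 \left[\nabla^2 E^i(s\x) - \nabla^2 E^i(\0)\right] \x \, \dd s, \]
so that the first-order (Gaussian) part cancels exactly and only a second-order remainder built from Hessian differences survives. This is precisely the quantity that hypothesis~\eqref{eq:condition-Q-subsampling} is designed to control.

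The crucial step is to bound the symmetric bilinear form $\langle \v, \bm D_s \x\rangle$, where $\bm D_s := \nabla^2 E^i(s\x) - \nabla^2 E^i(\0)$. I would first note that~\eqref{eq:condition-Q-subsampling} holds for all pairs $\y_1, \y_2$, so exchanging the roles of its arguments yields the matching lower bound, i.e. the two-sided estimate $-\bm Q \preceq \bm D_s \preceq \bm Q$. Conjugating by $\bm Q^{-1/2}$ (which preserves positive semidefiniteness) turns this into $-I \preceq \bm Q^{-1/2}\bm D_s\bm Q^{-1/2} \preceq I$, so the symmetric matrix $\bm Q^{-1/2}\bm D_s\bm Q^{-1/2}$ has operator norm at most $1$. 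Writing $\bm a = \bm Q^{1/2}\v$ and $\bm b = \bm Q^{1/2}\x$, and using Cauchy--Schwarz followed by Young's inequality,
\[ \langle \v, \bm D_s \x\rangle = \langle \bm a, (\bm Q^{-1/2}\bm D_s \bm Q^{-1/2})\bm b\rangle \le |\bm Q^{1/2}\v|\,|\bm Q^{1/2}\x| \le \tfrac 1 2\left(|\bm Q^{1/2}\v|^2 + |\bm Q^{1/2}\x|^2\right). \]
Integrating over $s \in [0,1]$ preserves the bound, giving $\langle \v, \bm G^i(\x)\rangle \le \tfrac 1 2(|\bm Q^{1/2}\v|^2 + |\bm Q^{1/2}\x|^2)$ for every index $i$, hence for the random draw $I$ almost surely.

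Finally I would evaluate this at $(\x_t, \v_t)$ and use the fact, recorded in the main text, that $(\x,\v)\mapsto \langle \x, \bm Q\x\rangle + \langle \v, \bm Q \v\rangle = |\bm Q^{1/2}\x|^2 + |\bm Q^{1/2}\v|^2$ is invariant under the flow of Eq.~(1) for any symmetric matrix $\bm Q$ (here with $\x_\star = \0$). This replaces the right-hand side at time $t$ by its value at time $0$, yielding the claim. The main obstacle is the middle step: the hypothesis supplies only a one-sided matrix inequality, whereas the form $\langle \v, \bm D_s \x\rangle$ with $\v \neq \x$ is not sign-definite; symmetrising the hypothesis to a two-sided bound and then working in the $\bm Q$-geometry via the $\bm Q^{-1/2}$-conjugation are what make the Cauchy--Schwarz estimate valid.
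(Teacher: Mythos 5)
Your proposal is correct and follows essentially the same route as the paper: the fundamental theorem of calculus to write $\bm G^i(\x)=\int_0^1[\nabla^2 E^i(s\x)-\nabla^2 E^i(\0)]\x\,\dd s$, the two-sided matrix bound combined with $\bm Q^{-1/2}$-conjugation and Cauchy--Schwarz (this is exactly the paper's Lemma~\ref{lem:matrix-inequality}), Young's inequality, and invariance of $|\bm Q^{1/2}\x|^2+|\bm Q^{1/2}\v|^2$ under the flow. Your explicit remark that the one-sided hypothesis symmetrises because it holds for all pairs $\y_1,\y_2$ is a point the paper leaves implicit, but otherwise the arguments coincide.
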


\begin{remark}
Lemma~\eqref{lem:computational_bound_subsampling} is easily extended to the case in which $\nabla E(\0) \neq \0$. In this case we have
\begin{align*} \langle \v_t, \widehat{U(\x_t)} \rangle & \leq \tfrac 1 2 (|{\bm Q}^{1/2} \x_0|^2 +|{\bm Q}^{1/2} \v_0|^2 ) \\
& \quad \quad \quad + (|\v_0|^2 + |\x_0|^2)^{1/2} |\nabla E(\0)|, \quad \text{a.s.}\end{align*}
\end{remark}

\begin{remark}
In practice one may wish to take $\bm Q$ to be a diagonal matrix, which reduces the computation of the computational bound to a $\mathcal O(d)$ computation instead of $\mathcal O(d^2)$. For example one could take $\bm Q = c \bm I$ for a suitable constant $c > 0$ such that~\eqref{eq:condition-Q-subsampling} is satisfied.
\end{remark}

\begin{remark}[Affine bound  for subsampling is strictly worse]
When we try to obtain an affine bound, of the form 
\[ \widehat{\lambda(\x_t, \v_t)} \leq a(\x_0, \v_0) + b(\x_0,\v_0),\]
then it seems we cannot avoid an expression for $a$ of the form of the bound in Lemma~\ref{lem:computational_bound_subsampling}. As a consequence, the affine bound is strictly worse than the constant bound.
\end{remark}

\begin{proof}[Proof (of Lemma~\ref{lem:computational_bound_subsampling})]
Suppose we have $I = i$ for the random index $I$ in Eq.~(9) of the paper. We compute
\begin{align*}
  &   \langle \v_t, \widehat{\nabla U(\x_t)} \rangle \\
  & = \langle \v_t, \nabla E^i(\x_t) - \nabla^2 E^i(\0) \x_t- \nabla E^i(\0) \rangle \\
  & = \langle \v_t,\int_0^1 \nabla^2 E^i(s \x_t) \x_t \, \dd s - \nabla^2 E^i(\0) \x_t \rangle.
\end{align*}
Then we may continue the above computation to find, using Lemma~\ref{lem:matrix-inequality} below, that 
\begin{align*}
    \langle \v_t, \widehat{\nabla U(\x_t)} \rangle & = \int_0^1 \langle \v_t,[\nabla^2 E^i(s \x_t) - \nabla^2 E^i(\0)] \x_t \rangle \, \dd s  \\
    & \leq \int_0^1 |{\bm Q}^{1/2} \v_t| \, |{\bm Q}^{1/2} \x_t| \, \dd s \\
    & \leq \tfrac 1 2 (|{\bm Q}^{1/2} \v_t|^2 + |{\bm Q}^{1/2} \x_t|^2 ).
\end{align*}

Since $\tfrac 1 2 (|{\bm Q}^{1/2} \v_t|^2 + |{\bm Q}^{1/2} \x_t|^2)$ is invariant under the dynamics, the stated conclusion follows.
\end{proof}

\begin{lemma}
\label{lem:matrix-inequality}
Suppose $\bm M, \bm P \in \R^{d \times d}$ are symmetric matrices with $\bm P$ positive definite and such that $-\bm P \preceq \bm M \preceq \bm P$. Then $\langle \bm M \y, \z \rangle \leq  |\bm P^{1/2} \y| \, |\bm P^{1/2} \z|$ for all $\y, \z \in \R^{d \times d}$.
\end{lemma}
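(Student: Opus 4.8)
The plan is to reduce to the special case $\bm P = \bm I$ by a symmetric change of variables, after which the claim is immediate from Cauchy--Schwarz. First I would set $\bm u := \bm P^{1/2}\y$ and $\bm w := \bm P^{1/2}\z$. Since $\bm P$ is positive definite, $\bm P^{1/2}$ is symmetric and invertible, so $\y = \bm P^{-1/2}\bm u$ and $\z = \bm P^{-1/2}\bm w$, and the target inequality $\langle \bm M \y, \z\rangle \leq |\bm P^{1/2}\y|\,|\bm P^{1/2}\z|$ becomes $\langle \bm M \bm P^{-1/2}\bm u,\, \bm P^{-1/2}\bm w\rangle \leq |\bm u|\,|\bm w|$.

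Next I would rewrite the left-hand side, using symmetry of $\bm P^{-1/2}$, as $\langle \bm N \bm u, \bm w\rangle$ with $\bm N := \bm P^{-1/2}\bm M \bm P^{-1/2}$, which is again symmetric. The key step is to translate the hypothesis $-\bm P \preceq \bm M \preceq \bm P$ into a bound on $\bm N$. Conjugating each of the two semidefinite inequalities by $\bm P^{-1/2}$ preserves the ordering, giving $-\bm I \preceq \bm N \preceq \bm I$; because $\bm N$ is symmetric this says exactly that all its eigenvalues lie in $[-1,1]$, i.e.\ $\|\bm N\| \leq 1$ in the Euclidean operator norm.

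Finally, Cauchy--Schwarz yields $\langle \bm N \bm u, \bm w\rangle \leq |\bm N \bm u|\,|\bm w| \leq \|\bm N\|\,|\bm u|\,|\bm w| \leq |\bm u|\,|\bm w|$, which is the desired bound after substituting back $|\bm u| = |\bm P^{1/2}\y|$ and $|\bm w| = |\bm P^{1/2}\z|$. The only point that deserves a line of justification is the conjugation step: one uses that $\bm A \preceq \bm B$ implies $\bm C^{\top}\bm A \bm C \preceq \bm C^{\top}\bm B \bm C$ for any $\bm C$ (here $\bm C = \bm P^{-1/2}$), which follows at once from evaluating the quadratic forms on $\bm C \x$. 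I do not anticipate a genuine obstacle here; the argument is a routine reduction, and the main thing to get right is simply verifying that the ordering is preserved under the symmetric conjugation.
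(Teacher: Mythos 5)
Your proposal is correct and follows essentially the same route as the paper: both reduce the claim to showing $\|\bm P^{-1/2}\bm M\bm P^{-1/2}\|\leq 1$ (you via conjugating the Loewner inequalities, the paper via bounding the quadratic form directly, which amounts to the same thing for a symmetric matrix) and then conclude by Cauchy--Schwarz.
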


\begin{proof}
Taking $\bm y = \bm P^{-1/2} \x$, we find
\begin{align*}
|\langle \bm P^{-1/2} \bm M \bm P^{-1/2} \x, \x \rangle| = |\langle \bm M \y, \y \rangle| \leq \langle \bm P \y, \y \rangle = |\x|^2,
\end{align*}
which establishes that $\| \bm P^{-1/2} \bm M \bm P^{-1/2} \| \leq 1$.
Using this observation  we arrive at
\begin{align*}
    \langle \bm M \y, \z \rangle & \leq \underbrace{\|\bm P^{-1/2} \bm M \bm P^{-1/2} \|}_{\leq 1}\,  |\bm P^{1/2} \y| \, |\bm P^{1/2} \z|.
\end{align*}
\end{proof}

\section{Scaling with dimension}
In Section~3.2 of the paper, we discuss the scaling of the Boomerang Sampler with dimension. The argument in that section is self contained, but relies on the observation that the change of $E_d(\x_t)$ over a time interval of order 1 is at least of order $d^{1/2}$. Here we motivate this observation.

In the following arguments, we assume stationarity of the process for simplicity.
Let $U_d$, $\SSigma_d$, $E_d$, $\Pi_d$, $\E_d$ be as described in Section 3.2 of the manuscript. For simplicity and without loss of generality we assume that $E_d(\x)$ is normalised as $\E_d[E_d(\x)]=0$. Furthermore, for simplicity we assume that $\E_d[\x]=\0$ although this condition can be relaxed.
 
As discussed we suppose 
that the sequence $(U_d)$ satisfies
\begin{equation}
\label{eq:condition-gradient-U}
\sup_{d\in\mathbb{N}}\E_d[|\SSigma_d^{1/2}\nabla U_d(\x)|^2]\le \kappa
\end{equation}
for some $\kappa>0$. 
Furthermore, we assume that the following form of the Poincar\'e inequality is satisfied for $\Pi_d(\dd\x) \propto \exp(-E_d(\x))\dif\x$:
\begin{equation} \label{eq:poincare}
C~\E_d\left[f_d(\x)^2\right]^{1/2}\le 
\E_d\left[|\SSigma_d^{1/2}\nabla f_d(\x)|^2\right]^{1/2}
\end{equation}
 for some constant $C>0$ not depending on $d$, and any differentiable  function $f_d: \R^d \rightarrow \R$ with mean $0$ and finite variance. 

By~\eqref{eq:condition-gradient-U} the expected number of reflections per unit time
$\E_d[\langle\v,\nabla U_d(\x)\rangle_+]$ is bounded with respect to dimension.
However the process mixes well in a single time unit under suitable regularity conditions
as we will discuss now.

By applying~\eqref{eq:poincare} to $f_d(\x)=(\SSigma_d^{-1/2}\x)_i$, where $\v_i$ denotes the $i$-th coordinate of $\v$, we have 
$C^2\E_d[|\SSigma_d^{-1/2} \x|^2]\le\E_d[\operatorname{trace}(\SSigma_d^{-1/2}\SSigma_d\SSigma_d^{-1/2})]=d$,
using the stated assumption $\E_d[\x] = \0$.

Also by~\eqref{eq:poincare} and by Minkowski's inequality, 
\begin{align*}
\E_d[E_d(\x)^2]^{1/2}&
\le C^{-1}
\E_d[|\SSigma_d^{1/2}\nabla E_d(\x)|^2]^{1/2}\\
&= C^{-1}
\E_d[|\SSigma_d^{1/2}\nabla U_d(\x)+\SSigma_d^{-1/2}\x|^2]^{1/2}\\
&= C^{-1}(\kappa^{1/2}+C^{-1}d^{1/2}) = \mathcal O(d^{1/2}).
\end{align*}
If $(\x_t, \v_t)$ satisfies the ODE Eq.~(1) of the paper, the unit time difference $E_d(\x_t)-E_d(\x_0)$ is 
\begin{align*}
\int_0^t\langle \nabla E_d(\x_s),\v_s\rangle\dif s\approx \int_0^t\langle  \SSigma_d^{-1}\x_s,\v_s\rangle\dif s.
\end{align*}
Here, the difference between the left- and the right-hand sides is $\int_0^t\langle\SSigma_d^{1/2}\nabla U(\x_s),\SSigma_d^{-1/2}\v_s\rangle\dif s$ which is  of order $d^{1/2}$ under the assumption of stationarity by ~\eqref{eq:condition-gradient-U} and the Cauchy-Schwarz inequality, using that $\E_d[|\SSigma_d^{-1/2}\v_s|^2]=d$.
The right-hand may be simplified to
\begin{align*}
&\int_0^t\langle  \SSigma_d^{-1}(\x_0\cos s+\v_0\sin s),-\x_0\sin s+\v_0\cos s\rangle\dif s\\
&=A_0\int_0^t2\sin s~\cos s~\dif s+B_0\int_0^t(\cos^2s-\sin^2s)\dif s\\
&=A_0(1-\cos 2t)/2+B_0(\sin 2t)/2
\end{align*}
where $A_0=(\langle \v_0,\SSigma_d^{-1}\v_0\rangle -\langle \x_0,\SSigma_d^{-1} \x_0 \rangle)/2$
and $B_0=\langle \x_0,\SSigma_d^{-1}\v_0\rangle$. 
Then $A_0$ and $B_0$ are 
uncorrelated since $\SSigma_d^{-1/2}\v_0$ follows the standard normal distribution. 
Also, $\E_d[A_0^2]\ge \operatorname{Var}(A_0)\ge \operatorname{Var}(\langle \v_0,\SSigma_d^{-1}\v_0\rangle)=2d$. Therefore, 
\begin{align*}
\E_d[|E_d(\x_t)-E_d(\x_0)|^2]&\gtrsim 
\E_d[A_0^2]\left(\frac{1-\cos 2t}{2}\right)^2\\
&\ge 2d\left(\frac{1-\cos 2t}{2}\right)^2. 
\end{align*}

Thus the change of $E_d(\x_t)$ over a term interval of $\mathcal O(1)$ is of order $d^{1/2}$ whereas $E_d(\x_t)$ itself has the same order. These informal arguments suggest that dynamics of the Boomerang sampler in a finite time interval sufficiently changes the log density even in high dimension. However, further study should be made in this direction. 

\section{Logistic regression}

We assume a prior distribution $\pi_0(\x) \sim \mathcal N(\0,\sigma^2 \bm I)$ on $\R^d$. Given predictors $\y^{(1)}, \dots, \y^{(n)}$ in $\R^d$, and outcomes $z^{(1)}, \dots, z^{(n)}$ in $\{0,1\}$, we obtain the negative log posterior distribution as
\begin{align*} & E(\x) = \sum_{i=1}^n \left\{ \log(1 + e^{\x^{\top} \y^{(i)}})- z^{(i)} \x^{\top} \y^{(i)}  \right\} +  |\x|^2/2 \sigma^2.\end{align*}
We then have
\begin{align*}
 \nabla E(\x) & = \x / \sigma^2 + \sum_{i=1}^n \y^{(i)} \left[ \frac {e^{\x^{\top} \y^{(i)}}} { 1+ e^{\x^{\top} \y^{(i)}}} -z^{(i)} \right], \\
 \nabla^2 E(\x) & =\bm I /\sigma^2 + \sum_{i=1}^n \frac{\y^{(i)} (\y^{(i)})^{\top}e^{\x^{\top} \y^{(i)}}}{  \left( 1+ e^{\x^{\top} \y^{(i)}}\right)^2}.
\end{align*}
In the experiments in this paper we take a flat prior, i.e. $\sigma^2 = \infty$.

Let 
\[ \x_{\star} = \argmin_{\x \in \R^d} E(\x).\]
We take $\SSigma^{-1} = \nabla^2 E(\x_{\star})$. We have $U(\x) = E(\x) - (\x - \x_{\star})^{\top} \nabla^2 E(\x_{\star}) (\x -\x_{\star})/2$, which is a difference of two positive definite matrices. Using the general inequality $a \mapsto |a|/(1+a)^2 \leq 1/4$, we find
\[ - \tfrac 1 4 \sum_{i=1}^n \y^{(i)} (\y^{(i)})^{\top} \preceq \nabla^2 U(\x) \preceq \tfrac 1 4 \sum_{i=1}^n \y^{(i)} (\y^{(i)})^{\top}.\]
We then simply have
\[ \| \nabla^2 U(\y) \| \leq M :=\tfrac 1 4 \| \sum_{i=1}^n \y^{(i)} (\y^{(i)})^{\top}\|.\] 

These observations may be applied in conjunction with the lemmas of Appendix B to obtain useful constant and affine computational bounds for the switching intensities.

\section{Diffusion bridge simulation}
We consider diffusion bridges of the form
\begin{equation} \label{eq:sde-sin}
\dd X_t = \alpha \sin(X_t) \dd t + \dd W_t, \quad X_0 = u, X_T = v, t\in [0,T]
\end{equation}
where $W$ is a scalar Brownian motion and $\alpha \ge 0$. The diffusion path is expanded with a truncated Faber-Schauder basis such that
\[
X^N_t = \Bar{\Bar\phi}(t) u + \Bar{\phi}(t) v + \sum_{i = 0}^N \sum_{j = 0}^{2^i - 1} \phi_{i,j}(t) x_{i,j},
\]
where $N$ is the truncation of the expansion and 
\begin{align*} \Bar \phi(t) & = t/T, \quad \quad  \Bar{\Bar\phi}(t)  = 1 - t/T, \\ \phi_{0,0}(t) & =\sqrt{T}\big ( (t/T)\ind_{[0,T/2]}(t) +(1-t/T)\ind_{(T/2,T]}(t)\big), \\
\phi_{i,j}(t) & = 2^{-i/2}\phi_{0,0}(2^i t - jT) \quad i \geq 0, \quad  0\le j\le 2^i-1,
\end{align*}
are the Faber-Schauder functions. As shown in \cite{bierkens2020piecewise}, the measure of the coefficients corresponding to~\eqref{eq:sde-sin} is derived from the Girsanov formula and given by
\[
\frac{\dd \mu}{\dd \mu_0}(\x, \v) \propto \exp \left\{ \frac{-\alpha}{2}\int_0^T \left( \alpha \sin^2(X^N_s) + \cos(X^N_s)\right)  \dd s \right\}
\]
where $\mu_0 = \mathcal{N}(\bm{0}, \bm I ) \otimes \mathcal{N}(\bm{0}, \bm I )$ with $\bm I$ the $2^{N+1}-1$ dimensional identity matrix. 
By standard trigonometric identities we have that
\[
\partial_{x_{i,j}}U(\x) = \frac{\alpha}{2} \int_{S_{i,j}} \phi_{i,j}(t) \left( \alpha \sin\left(2X^{N}_t\right) - \sin\left(X^{N}_t \right)\right) \dd t
\]
where $S_{i,j}$ is the support of the basis function $\phi_{i,j}$. Similarly to \cite{bierkens2020piecewise}, for each $i,j$, we use subsampling and consider the unbiased estimator for $\partial_{x_{i,j}}U(\x)$ given by   
\[
\widehat{\partial_{x_{i,j}}U(\x)} = S_{i,j}  \phi_{i,j}(\tau_{i,j}) \left( \alpha^2 \sin\left(2X^{N}_{\tau_{i,j}}\right) - \alpha \sin\left(X^{N}_{\tau_{i,j}} \right)\right)
\]
where $\tau_{i,j}$ is a uniform random variable on $S_{i,j}$. This gives Poisson rates $\widehat{\lambda_{i,j}(\x,\v)} = \langle \v, \widehat{\partial_{x_{i,j}} U(\x)} \rangle_+$. 
In this case, for all $i,j$,  $|\widehat{\partial_{x_{i,j}} U(\x)}|$ is globally bounded, say by $m_{i,j}$. We use the constant Poisson bounding rates given, in similar spirit as in Section 2.3 of the paper, by
\[
\overline \lambda_{i,j}(\x_t,\v_t) = m_{i,j}\sqrt{|x^{i,j}_0|^2 + |v^{i,j}_0|^2}, 
\]
where we used that $t \rightarrow |x^{i,j}_t|^2 + |v^{i,j}_t|^2$ is constant under the Factorised Boomerang trajectories.  
Similarly to \cite{bierkens2020piecewise}, the FBS gains computational efficiency by a local implementation which exploits the fact that each $\overline \lambda_{i,j}(\x,\v)$ is a function of just the coefficient $x_{i,j}$ (see \cite{bierkens2020piecewise}, Algorithm 3, for an algorithmic description of the local implementation of a factorised PDMP).


\end{document}